\newtheorem{Remark}{\it Remark}[section]
\newtheorem{Proposition}{\it Proposition}[section]
\newtheorem{Lemma}{\it Lemma}[section]
\begin{document}
\begin{spacing}{1.5}
%
\title{Opportunistic Multi-Channel Access in Heterogeneous 5G Network with Renewable Energy Supplies}
\author{Hang~Li, 
        Chuan~Huang, 
        Fuad E. Alsaadi, 
        Abdullah M. Dobaie,
        and~Shuguang~Cui 
\thanks{Part of this work appeared in the Proceedings of the 6th International Symposium on Communications Control, and Signal Processing (ISCCSP), Athens, Greece, May 21- 23, 2014.}
\thanks{H.~Li and S.~Cui are with the Department of Electrical and Computer Engineering, Texas A\&M University, College Station, Texas, 77843 USA (e-mail: david\_lihang@tamu.edu; cui@ece.tamu.edu).}
\thanks{C.~Huang is with the National Key Laboratory of Science and Technology on Communications, University of Electronic Science and Technology of China, Chengdu, Sichuan 610051 China (e-mail: huangch@uestc.edu.cn).}
\thanks{F. E. Alsaadi and A. M. Dobaie are with the Department of Electrical and Computer Engineering, King Abdulaziz University, Jeddah, 22254 Saudi Arabia (e-mail: fuad\_alsaadi@yahoo.com; adobaie@kau.edu.sa).}}
\maketitle
\begin{abstract}
A heterogeneous system, where small networks (e.g., small cell or WiFi) boost the system throughput under the umbrella of a large network (e.g., large cell), is a promising architecture for the 5G wireless communication networks, where green and sustainable communication is also a key aspect. Renewable energy based communication via energy harvesting (EH) devices is one of such green technology candidates. In this paper, we study an uplink transmission scenario under a heterogeneous network hierarchy, where each mobile user (MU) is powered by a sustainable energy supply, capable of both deterministic access to the large network via one private channel, and dynamic access to a small network with certain probability via one common channel shared by multiple MUs. Considering a general EH model, i.e., energy arrivals are time-correlated, we study an opportunistic transmission scheme and aim to maximize the average throughput for each MU, which jointly exploits the statistics and current states of the private channel, common channel, battery level, and EH rate. Applying a simple yet efficient ``save-then-transmit'' scheme, the throughput maximization problem is cast as a ``rate-of-return'' optimal stopping problem. The optimal stopping rule is proved to has a time-dependent threshold-based structure for the case with general Markovian system dynamics, and degrades to a pure threshold policy for the case with independent and identically distributed system dynamics. As performance benchmarks, the optimal power allocation scheme with conventional power supplies is also examined. Finally, numerical results are presented, and a new concept of ``EH diversity'' is discussed.
\end{abstract}
\begin{IEEEkeywords}
Heterogeneous networks, small cell, energy harvesting, opportunistic transmission, optimal stopping.
\end{IEEEkeywords}

\section{Introduction}
\subsection{Motivations}
Heterogeneous networks (HetNets), where small networks (e.g., small cell or WiFi) composed of low-power access points (APs) are placed under the coverage of a large network (e.g., large cell), are evolving into a new type of network deployment that could enhance the overall system throughput with reasonable cost and power consumption \cite{JGAndrews2013,AG}. Standardization bodies, such as ETSI and 3GPP, have paid much attention to this shifting of network paradigm and have made HetNets part of the current and future cellular standards. Now, commercial small cell deployments could already be found globally, operated by various cellular carriers \cite{smallcell,JGAndrews2012}.

In a traditional cellular network, a mobile user (MU) is usually assigned a dedicated private channel to access the base station (BS), while this link may experience bad channel conditions due to the possible severe path loss and shadowing between the MU and the BS. In such cases, however, the desired quality-of-service (QoS) could still be satisfied by allowing the MU to access a nearby AP in an underlying small network via a common channel, if the corresponding channel condition is relatively good. Essentially, the MU in the above HetNet could deploy a multi-channel access scheme: The messages from MU could be directly delivered to the cellular BS, or if available, jointly via a nearby low-power AP \cite{DLP}. It is worth noting that the small network could be operated over a band orthogonal to the large network: e.g., WiFi uses the unlicensed band \cite{MMAW} and femtocells could be allocated with different bands from the large network via orthogonal frequency division multiple access (OFDMA) or time division multiple access (TDMA) \cite{JGAndrews2012,XiaAndrews2010}. If needed, the small network can also share the same bands with the large network. For either case, there are two modes of access control for small networks: restricted access, i.e., only pre-registered MUs could access the corresponding AP \cite{JGAndrews2012,DLP}; and open access, i.e., any local MUs in the small network could gain the access. In practice, the MU may fail to establish a dedicated link to the small network due to congestion over the limited spectrum resources, which introduces another type of access randomness beyond channel variation in the conventional cellular system.

Another significant advantage enabled by the aforementioned HetNet is that the MU could potentially enjoy a longer lifetime since its power consumption may be reduced by the help of communicating with the nearby local AP. However, since the lifetime of an MU is still limited by the stored energy in the batteries \cite{HoZhang2012}, the MU should seek an ``active'' way to recharge itself, especially in a green fashion. Such renewable energy powered nodes, which can efficiently convert certain environment energy (e.g., those from solar, wind, and vibration) into electric energy \cite{SS}, will play critical roles in the next generation or 5G wireless system, which is designed to be environment friendly and to support diversified applications such as machine-to-machine communications and Internet of things (IoT). In this way, the MU could prolong the battery life almost infinitely, and fulfil the increasing demands of green operations in 5G \cite{Wang2012}. Compared with the conventional power supply, such a renewable energy supply raises a new transmission design constraint: The consumed energy up to any time should be bounded by the harvested energy until this point, which is named as the EH constraint \cite{HoZhang2012}.

\begin{figure}
  \centering
  \includegraphics[width=3.3in]{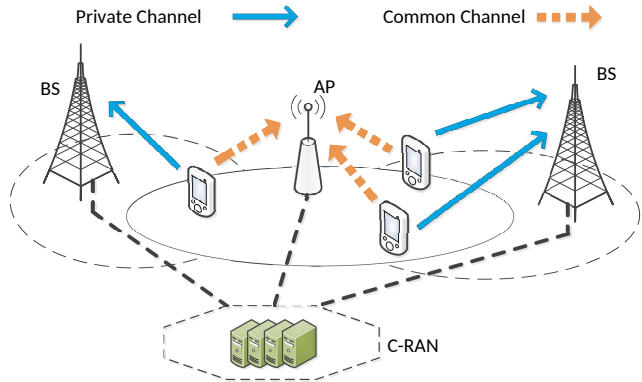}
  \caption{The uplink HetNet with multi-channel access with C-RAN platform, where each MU is powered by energy harvesters, and has accesses to the BS and AP via private and common channels, respectively.}
  \label{chamodel}
\end{figure}

In this paper, we study a simple uplink HetNet scenario depicted in Fig. \ref{chamodel}, where each EH-based MU has an individual link, namely a {\it private channel}, to the large network BS for deterministic access. Moreover, a local AP of a small network offers a {\it common channel}, which is randomly shared by all nearby MUs. Here we consider a scenario that each MU could access the common channel with a certain probability at each time slot. Thus, based on this multi-channel access setup, the MU could fulfil a transmission by using the harvested energy via either its private channel solely or via both the private and common channels simultaneously. Joint information processing is done with low latency by a cloud-based radio access network (C-RAN) platform, which is a popular platform candidate for 5G \cite{CRAN,FluidNet,Chilini}.

On the MU side, there are two types of state information that could be causally known before the transmission: the channel state information (CSI) of the links to the large network and the small network (if the AP was successfully accessed by the MU); and the energy state information (ESI), i.e., the EH rate (the harvested energy per unit time) and the battery state at the MU. Therefore, the MU could decide when to start a transmission with both CSI and ESI at hand. Obviously, a longer time to harvest energy while probing the system may result in a higher transmission power, and create a higher likelihood to secure the common channel; however, it may reduce the average effective transmission time. Thus, this leaves us an interesting tradeoff to optimize: energy saving time vs. data transmission time. In addition, we consider a ``save-then-transmit'' scheme such that each transmission would consume all the harvested energy at the MU. This suboptimal power utilization scheme is able to deploy a large instantaneous transmit power such that the short-term transmission rate is maximized, and is more tractable for analysis as well.

\subsection{Contributions}
First, we propose an opportunistic transmission scheme for the multi-channel HetNet uplink powered by sustainable energy supplies, which enhances the average throughput for each user by jointly exploiting the stochastic CSI and ESI. More precisely, the throughput maximization is cast as a ``rate-of-return'' optimal stopping problem. With Markovian private channel and EH models, the optimal stopping rule is proved to exist and have a state-dependent threshold-based structure under both finite and infinite battery capacity assumptions. The optimal throughput is proved to be strictly increasing over the probability that the common channel is secured.

Second, we study the case when the private channel gains and the EH rates are respectively independent and identically distributed (i.i.d.) across different communication blocks. The corresponding optimal stopping rule is proved to be a pure-threshold policy, i.e., the threshold does not change over time, which could be found via a one-dimension search. With such a fixed threshold, the mean saving time is proved to be decreasing polynomially over the probability that the common channel is secured. We also show via simulations that the randomness of EH rates, leading to the so-called ``EH diversity'', influences the throughput performance and could be exploited by our proposed pure-threshold policy: Specifically, we find that the more dynamically the EH rate varies, the higher the average throughput that the MU could achieve.

Finally, we quantify the performance of the case with conventional power supplies as the benchmark, showing that the corresponding optimal power allocation has a ``water-filling'' structure, where the water level is jointly determined by the statistics of the private and common channels, and the probability that the common channel is secured.

\subsection{Related Works}
Most of existing works related to the uplink of heterogeneous cellular networks assume certain deterministic access control of the underlying small networks \cite{JGAndrews2012,XiaAndrews2010,VCJA,MBennis}. From the views of both the femtocell owner and the overall network operator, authors in \cite{XiaAndrews2010} evaluated the femtocell performance with open and restricted accesses. It was shown that with nonorthogonal (in terms of frequency or time) multiple access for mobile users, open access benefits both the femtocell owner and the network operator; with orthogonal multiple access, the femtocell access control strategy (open or restricted) is closely dependent on the user density. In \cite{VCJA}, by adopting open access, the outage behaviors of both femtocell and large cell users were analyzed via stochastic geometry to model the locations of both the femtocell APs and the cellular users. The authors also presented several interference avoidance methods to enhance the per-user capacity. In \cite{MBennis}, each large cell user was assigned one direct link to the BS, and one relay link to the femtocell AP. Playing a non-cooperative game against the others, each user could seek its preferred open-access femtocell and split the rates between the BS and the AP to maximize its own utility. In contrast to these existing works, here we consider users with random, not deterministic, access to the local AP, which is more realistic in WiFi based HetNets.

On the other hand, the study of wireless transmitters powered by renewable energy has drawn a lot of attention in recent years \cite{SUlukus2015}. Particularly, with noncausal knowledge on energy arrival processes, the throughput maximization problem was investigated for both non-fading and fading channels in \cite{HoZhang2012,OO}, in addition to the classic three-node Gaussian relay channel \cite{HC}. With causal knowledge, the optimal throughput in fading channels over finite-time horizons was obtained via dynamic programming in \cite{HoZhang2012,OO}. A save-then-transmit protocol was proposed in \cite{ruizhang}, where each communication block is divided into two parts: the first one for harvesting energy and the other for data transmission. On the contrary, we consider the save-then-transmit strategy in this paper over an infinite number of communication blocks. For a wireless network where multiple EH-based users share one common channel, authors in \cite{MACprotocols} investigated the performance of some standard medium access control protocols, e.g., TDMA, framed-Aloha, and dynamic-framed-Aloha. Under the similar system setup, authors in \cite{NMichelusi2015} proposed a decentralized access scheme based on game theory, which could achieve some local maxima of the network utility. In this paper, a different scenario is studied where each user has a multi-channel access, and an individual utility to maximize.

Channel probing techniques have also been studied in the literature. In \cite{Xiaowen}, the authors discussed how a transmitter probes a relay channel with some additional time cost when its direct channel is undesirable. In addition, similar channel probing and selection problems for WiFi and cognitive radios were investigated in \cite{VK} and \cite{Tshu}, respectively. For \cite{Xiaowen,VK,Tshu}, the key idea is that the sender may spend time on probing the channel quality before starting a transmission. We here adopt a similar idea. However, we need to face a different and more challenging scenario: Besides probing the large cell network, we also need to probe the resource availability in the small local network, as well as the local battery status that is dynamic due to the energy arrival and withdrawal.

The remainder of this paper is organized as follows. The specific system model and problem formulation are described in Section \ref{sysandprob}. The throughput optimization problem is solved for both Markovian and i.i.d. cases in Section \ref{solution}. The optimal power allocation with traditional power supplies is discussed in Section \ref{special}. Numerical results are provided in Section \ref{numerical}. Finally, Section \ref{fin} concludes the paper.

\section{System Model and Problem Formulation}\label{sysandprob}
\subsection{System Model}\label{systemmodel}
As shown in Fig. \ref{chamodel}, an uplink HetNet communication scenario is considered: One private channel connected to the large network BS is assigned to each EH-based MU, and one common channel connected to a given small network AP is randomly accessed by all nearby users. All private and common channels are orthogonal in frequency, slotted in time, and synchronized. The duration of each time slot is unified. Moreover, in each slot, an MU can access at most one local AP through the common channel. Define the probability that the common channel is secured by an MU as $p_s$, called {\it securing probability}. Similar to a WiFi system, the MU cannot hold the common channel forever; it is required to release the common channel after the usage.

\subsubsection{Channel model}
Under the above setup, an MU can fulfill a transmission: i) via the private channel only; ii) or via both the private and common channels.
\begin{itemize}
  \item In case i), the received signal in the $t$-th time slot at the BS is given by
\begin{equation}\label{BSsig}
   y_t=h_t\sqrt{P_t}x_t+z_t,
\end{equation}
where $h_t$ is the channel gain of the MU-to-BS link, $P_t$ is the transmit power, $x_t$ is the transmitted signal with zero mean and unit variance, and $z_t$ is the circularly symmetric complex Gaussian (CSCG) noise with zero mean and unit variance. Define $\{H_t=|h_t|^2\}$ on a state space $\mathcal{H}$ with finite mean and variance.
  \item In case ii), the received signal in the $t$-th time slot at the BS is the same as (\ref{BSsig}), and that at the AP is given by
\begin{align}
   &y^{c}_t=h^{c}_t\sqrt{P_t^c}x^{c}_t+z^{c}_t,
\end{align}
where $h^{c}_t$ is the channel gain of the MU-to-AP link, $P_t^c$ is the transmit power over the common channel, $x^{c}_t$ and $z^{c}_t$ are defined similarly as in (\ref{BSsig}). Define $\{H_t^c=|h^{c}_t|^2\}$ on a space $\mathcal{H}_{c}$ with finite mean and variance.
\end{itemize}
Here, we assume that $H_t$ follows a more general Markovian model \cite{QZhang1999} while $H^c_t$ follows an i.i.d. model, due to the fact that the MU-to-BS link usually experiences a much longer distance such that the channel may be under correlated shadowing, while the MU-to-AP link usually experiences fast fading, given its much shorter distance. The CSI includes both $H_t$ and $H_t^{c}$. For simplicity, the time for the MU to learn the CSI is neglected given the much longer length of one time slot.

Assume that the fiber connections between the BS/AP and the C-RAN are perfect, such that the C-RAN based joint decoding is optimal. By applying the Shannon capacity formula, at time slot $t$, the instant transmission rate $R_t$ of the MU over the above channel model is expressed as
\begin{equation}
  R_t=\left\{
        \begin{array}{ll}
          \log\left(1+H_tP_t\right), & \hbox{via the private channel only;} \\
          \log\left(1+H_tP_t\right)+\log\left(1+H_t^{c} P_t^{c}\right), & \hbox{via both the private and common channels.}
        \end{array}
      \right.\nonumber
\end{equation}
Note that the common channel can be secured with probability $p_s$ in our proposed multi-channel model. To make the expression of $R_t$ more concise, we introduce an indicator $\phi_t$ such that
\begin{equation}
  \phi_t=\left\{
           \begin{array}{ll}
             1, & \hbox{with probability (w.p.) $p_s$;} \\
             0, & \hbox{w.p. $1-p_s$.}
           \end{array}
         \right.\nonumber
\end{equation}
Then, $R_t$ can be written as
\begin{align}\label{rate}
R_t=\log\left(1+H_tP_t\right)+\phi_t\log\left(1+ H_t^{c} P_t^{c}\right).
\end{align}
The constraint on transmit power levels $P_t$ and $P_t^c$ will be specified later\footnote{Note that even when $\phi_t=1$, $P_t^c$ may still be assigned as zero by our protocol, as explained later.}.

\subsubsection{Energy model}
In general, the entire operation of the MU relies on the harvested energy. Here, we mainly focus on the effect of the EH constraint on transmit power, not only for analytical tractability and gaining insights, but also due to the fact that data transmission usually dominates the power consumption in medium-to-long range wireless systems \cite{Cui2004,VB2006}. In other words, the energy consumption on circuit overhead and channel training (acquiring CSI of both private and common channels) are assumed relatively negligible.

We use $\left\{B_t\right\}_{t\geq1}$ to denote the energy level at the battery for the considered MU at the beginning of time slot $t$, and quantify the energy level into unit steps, i.e., $B_t\in\mathcal{B}=\{0, \delta, 2\delta,\ldots, B_{max}\delta\}$, where $\delta$ is the smallest energy unit, and $B_{max}$ could be either a finite integer or infinity. For the case of $B_{max}=+\infty$, it is a good approximation when the battery capacity is large enough compared with the EH rate, e.g., an AA-sized NiMH battery has a capacity of 7.7 kJ, which requires a couple of hours to be fully charged by some commercial solar panels \cite{VSharma}. During time slot $t$, the MU harvests $E_t$ amount of energy, where the sequence $\left\{E_{t}\right\}_{t\geq1}$ is modeled as a homogeneous Markov process \cite{NM}. Due to hardware limitations, the EH rate could be represented over a finite state space $\mathcal{E}\subseteq\left\{E:E=k\delta, k\in\mathds{N}\bigcup\{0\}\right\}$. The energy state information (ESI, i.e., EH rate and battery status) is assumed causally known by the MU.


\subsubsection{Operation model}
Given that the MU is driven by the accumulated energy, we consider a ``save-then-transmit'' scheme over multiple time slots: The MU harvests energy and exploits the access opportunity of the common channel simultaneously over a certain number of time slots, and then transmits by using up the total available energy in the battery. Such a scheme has the nature of maximizing the short-term transmission rate, and is practical due to its implementation simplicity. As such, if we let $t=1$ as the first time slot after one data transmission, $B_{t}$ can be written as
\begin{equation}
  B_{t}=\min\left\{\sum_{i=1}^{t-1}E_i,B_{max}\delta\right\}.\nonumber
\end{equation}
When $t=1$, there is $B_{1}=E_0$, where $E_0$ is the accumulated energy during the transmission slot in the previous save-then-transmit period. The MU decides when to stop ``saving'' and start a transmission according to its current CSI and ESI. Specifically, at the beginning of time slot $t$, according to some optimal save-then-transmit policy, an MU can:
\begin{itemize}
  \item either transmit immediately during the current time slot (via either the private channel or both the private and common channels);
  \item or skip transmission (release the common channel if it has been secured by the MU).
\end{itemize}
\begin{figure}
  \centering
  \includegraphics[width=3in]{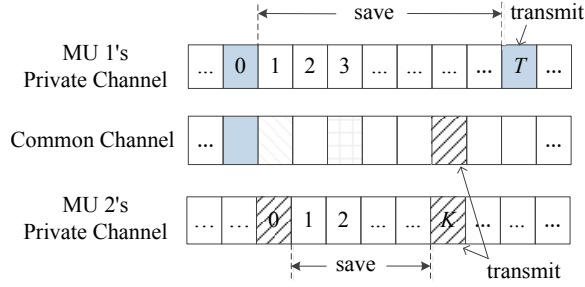}
  \caption{A realization of the proposed save-then-transmit scheme in multi-channel access.}
  \label{sysmodel}
\end{figure}

In Fig. \ref{sysmodel}, we show one realization of the saving and access process, in which two users are assigned with two private channels, respectively, and share one common channel. In particular, MU 1 transmits only through its private channel at time $T$ and MU 2 transmits via both its private and the common channel at time $K$.

\subsection{Problem Formulation}\label{formulation}
Our goal is to maximize the average throughput of the MU. First, we determine the transmit power for maximizing the instant rate $R_t$. At time $t$, according to the save-then-transmit scheme, it is easy to see that the transmit power $P_t$ and $P_t^{c}$ satisfy $P_t+\phi_tP_t^{c}=B_t$. When $\phi_t=0$, it follows $P_t=B_t$, since the MU can only use the private channel; and when if $\phi_t=1$, in order to maximize $R_t$, the power allocation follows the ``water-filling'' scheme given in the next lemma.
\begin{Lemma}\label{powallo}
When the MU can transmit via both the private and common channels (i.e., $\phi_t=1$), it is optimal to allocate power as follows:
\begin{itemize}
  \item If $\left|\frac{1}{H_t^{c}}-\frac{1}{H_t}\right|< B_t$, we have that $P_t=\frac{1}{2}\left(B_t+\frac{1}{H_t^c}-\frac{1}{H_t}\right)$ and $P_t^c=\frac{1}{2}\left(B_t+\frac{1}{H_t}-\frac{1}{H_t^c}\right)$;
  \item If $\left|\frac{1}{H_t^{c}}-\frac{1}{H_t}\right|\geq B_t$ and $H_t> H_t^{c}$, we have $P_t=B_t$ and $P_t^c=0$;
  \item If $\left|\frac{1}{H_t^{c}}-\frac{1}{H_t}\right|\geq B_t$ and $H_t< H_t^{c}$, we have $P_t=0$ and $P_t^c=B_t$.
\end{itemize}
\end{Lemma}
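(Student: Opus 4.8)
The plan is to treat this as a concave maximization of the instantaneous rate over a single power budget, and then read off the three cases from the standard water-filling Karush--Kuhn--Tucker (KKT) conditions. Since $\phi_t=1$ and the save-then-transmit scheme imposes $P_t+P_t^c=B_t$, I would first eliminate one variable and write the objective as the single-variable function
\begin{equation}
g(P_t)=\log\left(1+H_tP_t\right)+\log\left(1+H_t^c(B_t-P_t)\right),\quad P_t\in[0,B_t].\nonumber
\end{equation}
Each logarithmic term is concave in $P_t$ and the feasible set $[0,B_t]$ is a compact interval, so $g$ is concave and attains its maximum either at an interior stationary point or at an endpoint.

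Next I would compute the first-order condition. Differentiating gives
\begin{equation}
g'(P_t)=\frac{H_t}{1+H_tP_t}-\frac{H_t^c}{1+H_t^c(B_t-P_t)},\nonumber
\end{equation}
and setting $g'(P_t)=0$ yields, after clearing denominators, the water-filling solution $P_t=\tfrac{1}{2}\left(B_t+\tfrac{1}{H_t^c}-\tfrac{1}{H_t}\right)$ together with $P_t^c=B_t-P_t=\tfrac{1}{2}\left(B_t+\tfrac{1}{H_t}-\tfrac{1}{H_t^c}\right)$. This stationary point lies strictly inside $(0,B_t)$ exactly when both $P_t>0$ and $P_t^c>0$, i.e. when $\left|\tfrac{1}{H_t^c}-\tfrac{1}{H_t}\right|<B_t$; by concavity it is then the global maximizer, which establishes the first case.

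For the remaining cases I would invoke concavity together with the sign of $g'$ at the boundary. When $\left|\tfrac{1}{H_t^c}-\tfrac{1}{H_t}\right|\geq B_t$ the interior stationary point falls outside $[0,B_t]$, so the optimum is pushed to an endpoint. If additionally $H_t>H_t^c$, then $\tfrac{1}{H_t^c}-\tfrac{1}{H_t}\geq B_t$ forces the stationary point to lie at or beyond $B_t$, so $g$ is nondecreasing on $[0,B_t]$ and the optimum is $P_t=B_t$, $P_t^c=0$; symmetrically, if $H_t<H_t^c$ then the stationary point lies at or below $0$, $g$ is nonincreasing, and the optimum is $P_t=0$, $P_t^c=B_t$. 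This recovers the second and third cases.

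I expect the only real subtlety to be the bookkeeping that maps the feasibility and sign conditions onto the channel-gain comparison $H_t\gtrless H_t^c$ and handles the boundary equality cleanly; the optimization itself is a routine one-dimensional concave program with no genuine obstacle.
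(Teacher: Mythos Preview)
Your proposal is correct and is exactly the ``standard convex optimization'' argument the paper alludes to; the paper itself omits the proof for brevity, so your single-variable concave maximization with a first-order interior condition and endpoint analysis is precisely what is intended. The bookkeeping you flag (matching the feasibility inequalities to $H_t\gtrless H_t^c$) is indeed the only thing to check, and you have it right.
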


Lemma \ref{powallo} can be proved by using standard convex optimization techniques and thus the proof is omitted for brevity. For notation simplicity, we define the state of the MU, including CSI and ESI, at time $t$ as $\mathbf{F}_t=\{\phi_t,B_t,E_{t-1},H_t,H_t^{c}\}\in\mathcal{F}=\{0,1\}\times\mathcal{B}\times\mathcal{E}\times\mathcal{H}\times\mathcal{H}_c$. In this way, $R_t=R(\mathbf{F}_t)$ is fully determined by $\mathbf{F}_t$.

Next, we let $T$ be some stopping rule indicating the time slot to stop saving and start transmission. Thus, the transmission rate at the time slot $T$ would be denoted as $R(\mathbf{F}_T)$. Here, we make the following assumption: The steady-state distribution of $\{B_t\}$ exists under the stopping rule $T$. We will verify this assumption later by showing that our proposed transmission scheme will indeed result in a stationary $\{B_t\}$. With the above assumption, it follows that the steady-state distribution of $\{\mathbf{F}_t\}$ also exists given that $\{E_t\}$ and $\{H_t\}$ are stationary, respectively. Then, applying the stopping rule $T$ for infinitely many times, we obtain
\begin{equation}
\frac{\underset{L\rightarrow\infty}{\lim}\frac{1}{L}\sum_{l=1}^LR(\mathbf{F}_{T_l})}{\underset{L\rightarrow\infty}{\lim}\frac{1}{L}\sum_{l=1}^LT_l} =\frac{\mathbb{E}[R(\mathbf{F}_T)]}{\mathbb{E}[T]}=\lambda,\nonumber
\end{equation}
where the expectation is taken over the stationary distribution of $\mathbf{F}_t$ and $T$, and $\lambda$ is the average throughput per save-then-transmit period. The core of the proposed save-then-transmit scheme is to find the optimal stopping rule $T^*$ to achieve the maximum throughput $\lambda^*$, which are defined as
\begin{equation}\label{optithrou}
   \lambda^*\triangleq\sup_{T\geq1}\frac{\mathbb{E}[R(\mathbf{F}_T)]}{\mathbb{E}[T]},~T^*\triangleq\arg\sup_{T\geq1}\frac{\mathbb{E}[R(\mathbf{F}_T)]}{\mathbb{E}[T]}.
\end{equation}
In the next section, we will find $T^*$ and $\lambda^*$.


\section{Optimal Stopping Rule and Throughput}\label{solution}
The problem defined in (\ref{optithrou}) is a ``rate-of-return'' problem and could be converted into a standard optimal stopping problem \cite{somepro,optstop}. With some $\lambda>0$ and, we let $G_T(\lambda)=R(\mathbf{F}_T)-\lambda T$, and consider a new problem:
\begin{equation}\label{newproblem}
   \sup_{T\geq1}\mathbb{E}[G_{T}(\lambda)].
\end{equation}
Under this interpretation, $R(\mathbf{F}_T)$ can be regarded as the offer at time $T$, $\lambda T$ is the cost, and $G_T(\lambda)$ is the net reward. We let $G_{\infty}=-\infty$ since it is irrational that a transmitter does not send any data forever. The following lemma, which is directly from Theorem 1 of chapter 6 in \cite{optstop}, connects problems (\ref{optithrou}) and (\ref{newproblem}):
\begin{Lemma}\label{lemmaequi}
$i)$ If (\ref{optithrou}) holds, it follows that when $\lambda=\lambda^*>0$, $\sup_{T\geq1}\mathbb{E}[G_{T}(\lambda^*)]=0$ and the supreme is attained at the same $T^*$ in (\ref{optithrou}); and $ii)$ conversely, if for some $\lambda^*>0$, $\sup_{T\geq1}\mathbb{E}[G_{T}(\lambda^*)]=0$ and it is attained by some $T^*$, then (\ref{optithrou}) holds.
\end{Lemma}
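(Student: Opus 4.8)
The plan is to reduce both directions of Lemma \ref{lemmaequi} to a single elementary factorization of the expected net reward. First I would write, for any admissible stopping rule $T$ and any $\lambda>0$,
\begin{equation}
\mathbb{E}[G_T(\lambda)] = \mathbb{E}[R(\mathbf{F}_T)] - \lambda\,\mathbb{E}[T] = \mathbb{E}[T]\left(\frac{\mathbb{E}[R(\mathbf{F}_T)]}{\mathbb{E}[T]} - \lambda\right).\nonumber
\end{equation}
The essential observation is that $T\geq1$ forces $\mathbb{E}[T]\geq1>0$, so the prefactor $\mathbb{E}[T]$ is strictly positive and $\mathbb{E}[G_T(\lambda)]$ always shares the same sign as the bracketed quantity $\frac{\mathbb{E}[R(\mathbf{F}_T)]}{\mathbb{E}[T]}-\lambda$. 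This is the only structural fact needed; both parts follow by instantiating it at $\lambda=\lambda^*$.

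For direction $i)$, I would assume that $\lambda^*=\sup_{T\geq1}\frac{\mathbb{E}[R(\mathbf{F}_T)]}{\mathbb{E}[T]}$ and that it is attained at $T^*$. Optimality gives $\frac{\mathbb{E}[R(\mathbf{F}_T)]}{\mathbb{E}[T]}\leq\lambda^*$ for every $T$, hence $\mathbb{E}[G_T(\lambda^*)]\leq0$ for all $T$ by the sign argument, so $\sup_{T}\mathbb{E}[G_T(\lambda^*)]\leq0$. At $T^*$ the ratio equals $\lambda^*$, so the bracket vanishes and $\mathbb{E}[G_{T^*}(\lambda^*)]=0$, giving $\sup_{T}\mathbb{E}[G_T(\lambda^*)]\geq0$. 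Combining the two bounds yields $\sup_{T}\mathbb{E}[G_T(\lambda^*)]=0$, attained at the same $T^*$. For the converse $ii)$, I would assume $\sup_{T}\mathbb{E}[G_T(\lambda^*)]=0$ is attained at some $T^*$ for a fixed $\lambda^*>0$. Then $\mathbb{E}[G_T(\lambda^*)]\leq0$ for all $T$ translates, through the positive prefactor, into $\frac{\mathbb{E}[R(\mathbf{F}_T)]}{\mathbb{E}[T]}\leq\lambda^*$ for all $T$, so $\sup_{T}\frac{\mathbb{E}[R(\mathbf{F}_T)]}{\mathbb{E}[T]}\leq\lambda^*$; and $\mathbb{E}[G_{T^*}(\lambda^*)]=0$ forces $\frac{\mathbb{E}[R(\mathbf{F}_{T^*})]}{\mathbb{E}[T^*]}=\lambda^*$, so the supremum equals $\lambda^*$ and is attained at $T^*$, which is exactly (\ref{optithrou}).

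The main obstacle, and the only place where genuine care is required, is integrability: the factorization and the sign argument are meaningful only when $\mathbb{E}[T]\in(0,\infty)$ and $\mathbb{E}[R(\mathbf{F}_T)]$ is finite for the rules in play. The lower bound $\mathbb{E}[T]\geq1$ is automatic, but I would restrict attention to stopping rules with $\mathbb{E}[T]<\infty$ (equivalently, rule out the degenerate $T=\infty$, already excluded by the convention $G_{\infty}=-\infty$) and invoke the finite-mean assumptions on $\{H_t\}$ and $\{H_t^{c}\}$ together with the finite energy state space $\mathcal{E}$ and finite $B_{max}\delta$ to guarantee $\mathbb{E}[R(\mathbf{F}_T)]<\infty$. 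Granting these standard conditions, the statement is precisely Theorem 1 of Chapter 6 in \cite{optstop} specialized to the reward $R(\mathbf{F}_T)$ and unit per-slot cost, so beyond the factorization above no further machinery is needed.
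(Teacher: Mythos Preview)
Your proposal is correct. The paper does not supply its own proof of this lemma but simply invokes Theorem~1 of Chapter~6 in \cite{optstop}; your elementary factorization $\mathbb{E}[G_T(\lambda)]=\mathbb{E}[T]\bigl(\mathbb{E}[R(\mathbf{F}_T)]/\mathbb{E}[T]-\lambda\bigr)$ together with the sign argument is exactly the standard proof found there, so the approaches coincide.
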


Therefore, we just need to focus on finding the optimal stopping rule $T^*$ for problem (\ref{newproblem}) and $\lambda=\lambda^*>0$ such that $\sup_{T\geq1}\mathbb{E}[G_{T}(\lambda^*)]=0$. In the rest of this section, we first solve problem (\ref{newproblem}) for the case with Markovian private channel states and EH rates. Then, we consider the corresponding i.i.d. case.

\subsection{Markovian Case}\label{sollutiongeneral}
Here, we assume that $\{H_t\}_{t\ge1}$ and $\{E_t\}_{t\ge1}$ are homogeneous Markov processes with some stationary distributions, respectively. Given some $\lambda>0$, we define the remaining expected maximum reward starting at time $t$ in state $\mathbf{F}_t$ as
\begin{align}
  V_t(\mathbf{F}_t)&=\sup_{T\ge t}\mathbb{E}\left[R(\mathbf{F}_T)-\lambda T\mid\mathbf{F}_t\right].\label{remainopt}
\end{align}
Moreover, we observe that the ``cost'' $\lambda$ is a constant, which allows us to use $V_1(\mathbf{F}_t)$ to represent $V_t(\mathbf{F}_t)$, i.e.,
\begin{align}
  V_t(\mathbf{F}_t)&=\sup_{T\ge t}\mathbb{E}\left[R(\mathbf{F}_T)-\lambda(T-(t-1))\mid\mathbf{F}_t\right]-\lambda(t-1)\nonumber\\
  &=\sup_{T\ge 1}\mathbb{E}\left[R(\mathbf{F}_T)-\lambda T)\mid\mathbf{F}_t\right]-\lambda(t-1)\nonumber\\
  &=V_1(\mathbf{F}_t)-\lambda(t-1).\label{vtv1}
\end{align}
Based on this observation, the following proposition shows that the optimal stopping rule for problem (\ref{newproblem}) exists and also shows the form of the optimal stopping rule, whose proof is given in Appendix A.

\begin{Proposition}\label{stopruleexist}
The optimal stopping rule $T^*$ for problem (\ref{newproblem}) exists with either $B_{max}<+\infty$ or $B_{max}=+\infty$, and it has the following form
\begin{align}\label{optirulev1}
  T^*=\min\left\{t\geq1:R(\mathbf{F}_t)-\lambda^*=V_1(\mathbf{F}_t)\right\}.
\end{align}
Moreover, the optimal throughput $\lambda^*$ satisfies
\begin{equation}\label{optithroux}
    \lambda^*=\mathbb{E}\left[\max\left\{R(\mathbf{F}_1),\mathbb{E}\left[V_1(\mathbf{F}_{2})\mid\mathbf{F}_1\right]\right\}\right],
\end{equation}
where $\mathbf{F}_1$ is the initial state of each save-then-transmit period, a random vector defined over the space $\mathcal{F}_1\subseteq\mathcal{F}$ with a certain stationary distribution.
\end{Proposition}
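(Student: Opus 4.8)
The plan is to read (\ref{newproblem}) as a Markov optimal stopping problem with payoff sequence $Y_t=G_t(\lambda)=R(\mathbf{F}_t)-\lambda t$ driven by the chain $\{\mathbf{F}_t\}$, and to apply the classical existence theorem for optimal stopping rules \cite{optstop}. That theorem asserts that the rule which stops the first time the current payoff meets the optimal continuation value is optimal, once two hypotheses are checked: (A1) $\mathbb{E}[\sup_{t\ge1}Y_t]<\infty$ and (A2) $\limsup_{t\to\infty}Y_t\le G_\infty=-\infty$ almost surely. The explicit form (\ref{optirulev1}) and the throughput expression (\ref{optithroux}) are then short consequences of the dynamic-programming equation combined with (\ref{vtv1}) and Lemma \ref{lemmaequi}, so the substance of the proof lies in verifying (A1) and (A2).

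The key structural input for both conditions is that the battery, and hence the rate, can grow only slowly. Because the EH rate takes values in the finite set $\mathcal{E}$, it is bounded by some $E_{max}$, so $B_t\le\min\{(t-1)E_{max},B_{max}\delta\}$ covers both the finite- and the infinite-capacity cases. Feeding $P_t,P_t^c\le B_t$ into (\ref{rate}) and using $\log(1+xy)\le\log(1+x)+\log(1+y)$, I would derive $R(\mathbf{F}_t)\le\log(1+H_t)+\log(1+H_t^c)+2\log(1+(t-1)E_{max})$, whence $Y_t\le\log(1+H_t)+\log(1+H_t^c)+b_t$ with $b_t\triangleq 2\log(1+(t-1)E_{max})-\lambda t$. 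The auxiliary sequence $b_t$ is bounded above and tends to $-\infty$ linearly. Since $H_t$ and $H_t^c$ have finite first moment, a Borel--Cantelli argument gives $\log(1+H_t)/t\to0$ almost surely, so $Y_t\to-\infty$ and (A2) holds.

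For (A1), the finite-capacity case is immediate: there $R(\mathbf{F}_t)$ is uniformly bounded, so $\sup_t Y_t$ is dominated by a deterministic constant. The infinite-capacity case is the main obstacle, since the battery is unbounded and the channel gains need not be. Here I would control the tail: the finite mean of $H_t,H_t^c$ makes their tail probabilities decay fast enough on the logarithmic scale that the linear decay of $b_t$ dominates, and a union-bound/summability estimate on $\sum_t\mathbb{E}[(\log(1+H_t)+\log(1+H_t^c)+b_t)^+]$ yields $\mathbb{E}[\sup_t Y_t]<\infty$. (If the channel state spaces $\mathcal{H},\mathcal{H}_c$ are taken finite, as in the FSMC model \cite{QZhang1999}, this bound is deterministic and (A1) is trivial.) With (A1)--(A2) verified, the existence theorem produces $V_t$ as the monotone limit of the finite-horizon value functions and certifies that the optimal rule stops at the first $t$ with $R(\mathbf{F}_t)-\lambda t=V_t(\mathbf{F}_t)$; substituting (\ref{vtv1}) and putting $\lambda=\lambda^*$ reduces this to $R(\mathbf{F}_t)-\lambda^*=V_1(\mathbf{F}_t)$, which is (\ref{optirulev1}).

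Finally, for (\ref{optithroux}) I would invoke the Bellman equation at $t=1$, $V_1(\mathbf{F}_1)=\max\{R(\mathbf{F}_1)-\lambda,\,\mathbb{E}[V_2(\mathbf{F}_2)\mid\mathbf{F}_1]\}$, and replace $V_2(\mathbf{F}_2)$ by $V_1(\mathbf{F}_2)-\lambda$ via (\ref{vtv1}) to obtain $V_1(\mathbf{F}_1)=\max\{R(\mathbf{F}_1),\,\mathbb{E}[V_1(\mathbf{F}_2)\mid\mathbf{F}_1]\}-\lambda$. Taking expectation over the stationary law of $\mathbf{F}_1$ and using that the stopping time may observe $\mathbf{F}_1$, so that $\sup_{T\ge1}\mathbb{E}[G_T(\lambda)]=\mathbb{E}[V_1(\mathbf{F}_1)]$, Lemma \ref{lemmaequi} forces $\mathbb{E}[V_1(\mathbf{F}_1)]=0$ at $\lambda=\lambda^*$. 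Rearranging the averaged Bellman equation then gives $\lambda^*=\mathbb{E}[\max\{R(\mathbf{F}_1),\mathbb{E}[V_1(\mathbf{F}_2)\mid\mathbf{F}_1]\}]$, completing the proof.
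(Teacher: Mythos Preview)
Your proposal is correct and follows the same overall architecture as the paper's proof: verify the two standard hypotheses of the optimal-stopping existence theorem in \cite{optstop}, then read off (\ref{optirulev1}) from the Bellman equation combined with the time-shift identity (\ref{vtv1}), and finally obtain (\ref{optithroux}) by averaging the Bellman equation at $t=1$ and invoking Lemma~\ref{lemmaequi}. The paper labels your (A1)/(A2) as C1/C2 and treats the finite- and infinite-capacity cases separately in the same way you do.

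The only noteworthy difference is in the mechanics of verifying the two conditions when $B_{max}=+\infty$. For C2 the paper bounds $R(\mathbf{F}_T)-\lambda T$ by $\log\bigl((1+H_TE_{max}T)/2^{\lambda T/2}\bigr)+\log\bigl((1+H_T^cE_{max}T)/2^{\lambda T/2}\bigr)$ and dispatches each term with L'H\^opital's rule, whereas you separate the channel and battery contributions via $\log(1+xy)\le\log(1+x)+\log(1+y)$ and then use a Borel--Cantelli argument on $\log(1+H_t)/t$. For C1 the paper argues by truncation---using C2 to reduce to a finite horizon $1\le T\le N$ and bounding the finite supremum---while you propose a direct tail-summability bound on $\sum_t\mathbb{E}[(\log(1+H_t)+\log(1+H_t^c)+b_t)^+]$. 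Your route is arguably cleaner probabilistically (the paper's truncation step implicitly assumes the very finiteness it is trying to establish, so it needs a random $N$ and dominated convergence to be made rigorous), but it does require the finite second moment of $H_t,H_t^c$---which the paper assumes---rather than merely the finite first moment you cite, since the $1/t$ tail from Markov's inequality alone is not summable. With that small correction both arguments go through, and the remainder of your proof (derivation of $T^*$ and of (\ref{optithroux})) matches the paper line for line.
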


It is observed from (\ref{optirulev1}) that the optimal stopping rule for problem (\ref{newproblem}) is state-dependent and has a threshold-based structure with parameter $\lambda^*$. The structure is derived from the optimality equation (see Theorem 2 in \cite{somepro}), or equivalently, the dynamic programming equation (see (3) in \cite{Huiwang}). Such a structure also implies that the closed form of calculation of $\lambda^*$ is in general extremely difficult, especially in Proposition \ref{stopruleexist} where the stationary distribution of the battery is unknown and the battery capacity could be infinite. Thus, numerical methods are more preferred in finding $\lambda^*$.

Although the calculation of $\lambda^*$ is hard, some properties of $\lambda^*$ can be obtained and are given in the next proposition.
\begin{Proposition}\label{monoV}
$\lambda^*$ is uniquely determined by (\ref{optithroux}) and is strictly increasing over $p_s$.
\end{Proposition}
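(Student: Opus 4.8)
The plan is to reduce both assertions to the behaviour of the single auxiliary function
\[
  h(\lambda,p_s)\triangleq\sup_{T\ge1}\mathbb{E}\left[R(\mathbf{F}_T)-\lambda T\right],
\]
whose root, by Lemma \ref{lemmaequi}, is exactly $\lambda^*(p_s)$. Concretely, taking expectations in the optimality equation $V_1(\mathbf{F})=\max\{R(\mathbf{F}),\mathbb{E}[V_1(\mathbf{F}')\mid\mathbf{F}]\}-\lambda$ shows that $h(\lambda,p_s)=\mathbb{E}[V_1(\mathbf{F}_1)]$ and that the condition $h(\lambda,p_s)=0$ is precisely (\ref{optithroux}). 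Thus ``uniquely determined by (\ref{optithroux})'' is the same as ``unique root of $h(\cdot,p_s)$'', and the $p_s$-monotonicity of $\lambda^*$ follows by tracking how this root moves as $p_s$ varies.

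For uniqueness I would show $h(\cdot,p_s)$ is continuous and strictly decreasing. For every fixed rule $T$ (with $\mathbb{E}[T]\ge1$) the map $\lambda\mapsto\mathbb{E}[R(\mathbf{F}_T)]-\lambda\mathbb{E}[T]$ is affine with slope $-\mathbb{E}[T]\le-1$, so for $\lambda_1<\lambda_2$ one gets $h(\lambda_1,p_s)\ge h(\lambda_2,p_s)+(\lambda_2-\lambda_1)$, i.e. a strict decrease; moreover $h$ is convex (a supremum of affine functions) and finite (the per-slot reward grows at most logarithmically in $t$ while the cost $\lambda t$ grows linearly), hence continuous. A strictly decreasing continuous function has at most one zero, and existence of the zero is already supplied by Proposition \ref{stopruleexist} and Lemma \ref{lemmaequi}, so $\lambda^*(p_s)$ is unique.

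For the $p_s$-monotonicity the key step is to prove that, for each fixed $\lambda$, $h(\lambda,p_s)$ is increasing in $p_s$. First, Lemma \ref{powallo} makes the stage reward monotone in the securing indicator: with all other components fixed, $R(1,B,E,H,H^c)\ge R(0,B,E,H,H^c)$ because $P^c=0$ is always feasible, and strictly so whenever $H^c>0$ and $B>0$. Second, since $\phi_t$ is i.i.d. Bernoulli$(p_s)$ and independent of $(B_t,E_t,H_t,H_t^c)$, the continuation value $\mathbb{E}[V_1(\mathbf{F}')\mid\mathbf{F}]$ does not depend on the current indicator and averages the two branches $V_1(1,\cdot)$ and $V_1(0,\cdot)$ with weights $p_s$ and $1-p_s$. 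I would then run value iteration $V^{[n+1]}=\max\{R,\mathbb{E}[V^{[n]}(\mathbf{F}')\mid\cdot]\}-\lambda$ and prove by induction that each $V^{[n]}$ is (a) monotone in its $\phi$-coordinate and (b) monotone in $p_s$; both are preserved by the operator, since raising $p_s$ shifts weight onto the larger branch and raises the iterate pointwise. Passing to the limit yields $V_1^{p_s^{(2)}}\ge V_1^{p_s^{(1)}}$ pointwise for $p_s^{(2)}>p_s^{(1)}$, hence $h(\lambda,\cdot)$ increasing. Combined with the strict decrease in $\lambda$, evaluating $h(\lambda^*(p_s^{(1)}),p_s^{(2)})\ge h(\lambda^*(p_s^{(1)}),p_s^{(1)})=0$ forces the root upward, giving $\lambda^*(p_s^{(2)})\ge\lambda^*(p_s^{(1)})$.

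The main obstacle is upgrading this to a \emph{strict} inequality and handling one technical wrinkle. Strictness needs a positive-probability event on which the common channel is genuinely exercised at the stopping time; here the fact that the stopping region is upward-closed in $\phi$ (one always stops at $\phi=1$ whenever one stops at $\phi=0$, since $R$ is larger there while the continuation value is unchanged) is what propagates the strict stage-reward gap of Lemma \ref{powallo} into a strict gap for $h$, and hence a strictly positive upward move of the root. The wrinkle is that the expectation defining $h$ is taken under the stationary law of $\mathbf{F}_1$, which itself depends on $p_s$ through the policy; I would circumvent differentiating through this moving measure by relying on the \emph{pointwise} value-function monotonicity above (which is independent of the initial law) together with the fixed-$\lambda$ characterization $h(\lambda,p_s)=0$. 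I expect the convergence of value iteration in the $B_{max}=+\infty$ case and the strictness bookkeeping to be the most delicate parts.
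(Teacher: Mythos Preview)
Your approach is correct and shares the same backbone as the paper's: both establish the $p_s$-monotonicity by inductively propagating the stage-reward inequality $R(1,\cdot)\ge R(0,\cdot)$ through the Bellman recursion. The packaging differs in useful ways. For uniqueness, the paper observes that the two sides of (\ref{optithroux}) are monotone in opposite directions over $\lambda^*\in[0,\infty)$ and therefore cross exactly once; your slope bound $h(\lambda_1,p_s)\ge h(\lambda_2,p_s)+(\lambda_2-\lambda_1)$ is a crisper formulation of the same fact. For the $p_s$-monotonicity, the paper first truncates to a finite horizon $M$ (using $\mathbb{P}(T^*\ge M)<\epsilon$), then runs backward dynamic programming from $t=M$ down to $t=1$ with $\lambda^*$ held fixed, showing at each step that $\mathbb{E}[V_1(\mathbf{F}_{k})\mid\mathbf{F}_{k-1}]$ strictly increases when $p_s$ is bumped to $p_s+\Delta$, and closes by contradiction with $\mathbb{E}[V_1(\mathbf{F}_1)]=0$; your value-iteration argument is the same induction re-indexed forward, but avoids the $\epsilon$-truncation bookkeeping. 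Two places where your outline goes further than the paper: (i) you explicitly isolate the ``upward-closed in $\phi$'' structure of the stopping region as the mechanism for strictness, whereas the paper obtains strictness by invoking the modeling assumption that the private channel cannot be uniformly better than the common one; and (ii) you flag that the stationary law of $\mathbf{F}_1$ itself depends on $p_s$ through the policy, a point the paper passes over---your resolution via pointwise value-function monotonicity (independent of the initial law) is the clean way to close that gap.
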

\begin{proof}
We first show the uniqueness of $\lambda^*$. We observe that in (\ref{optithroux}), its left-hand side is monotonically increasing from zero to positive infinity over $\lambda^*\in[0,+\infty)$. Notice that in the right-hand side of (\ref{optithroux}), we have
\begin{align}
  \mathbb{E}\left[V_1(\mathbf{F}_{2})\mid\mathbf{F}_1\right]
  =\mathbb{E}\left[\left.\sup_{T\ge 1}\mathbb{E}\left[R(\mathbf{F}_T)-\lambda^* T\mid\mathbf{F}_2\right]\right|\mathbf{F}_1\right],\nonumber
\end{align}
which is obtained according to (\ref{remainopt}). It follows that the right-hand side of (\ref{optithroux}) is monotonically deceasing from a finite number, i.e., from
\begin{equation}
  \mathbb{E}\left[\max\left\{R(\mathbf{F}_1),\mathbb{E}\left[\left.\sup_{T\ge 1}\mathbb{E}\left[R(\mathbf{F}_T)\mid\mathbf{F}_2\right]\right|\mathbf{F}_1\right]\right\}\right],\nonumber
\end{equation}
to negative infinity over $\lambda^*\in[0,+\infty)$. Thus, there exists a unique $\lambda^*$ that makes (\ref{optithroux}) hold.

For the monotonicity of $\lambda^*$ over $p_s$, please see Appendix B.
\end{proof}
\begin{Remark}
The strict monotonicity of the optimal throughput $\lambda^*$ over the securing probability $p_s$ implies that the common channel is helpful in general.
\end{Remark}
\begin{Remark}
The stationary distribution of $\{B_t\}$ exists under the optimal stopping rule $T^*$ in (\ref{optirulev1}). Specifically:
\begin{itemize}
  \item When $B_{max}$ is finite, the transition probability of the energy level is also determined under the stopping rule $T^*$ and the stationary distribution of $E_t$. Moreover, all the attainable states of the battery form a positive recurrent class. Thus, $\{B_t\}$ has a steady-state distribution.
  \item When $B_{max}$ is infinite, from the perspective of queueing theory, the average discharging rate of the battery is the same as the recharging rate since all energy will be used for transmission in each save-then-transmit period. Therefore, the stationary distribution of $\{B_t\}$ exists. Moreover, it can be approximated as a Brownian motion process \cite{HChen2001}.
\end{itemize}
\end{Remark}

\subsection{i.i.d. Case}\label{infibattery}
In this subsection, we focus on the case when $\{H_t\}_{t\geq1}$ and $\{E_t\}_{t\geq1}$ are both i.i.d., respectively. As a special case of the one studied in the previous subsection, the optimal stopping rule of this case still exists. Taking one step further, the corresponding optimal stopping rule is simplified to bear a pure-threshold structure.
\begin{Proposition}\label{stoppingrule2}
When $\{H_t\}_{t\geq1}$ and $\{E_t\}_{t\geq1}$ are i.i.d. with finite means and variances, respectively, the optimal stopping rule $T^*$ for problem (\ref{newproblem}) has the following form:
\begin{align}\label{optistoprule2}
     T^*=\min\left\{t\geq1:R(\mathbf{F}_t)\geq\gamma\right\},
\end{align}
where $\gamma$ is a fixed real number.
\end{Proposition}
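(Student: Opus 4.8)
The plan is to obtain (\ref{optistoprule2}) as a sharpening of Proposition \ref{stopruleexist} rather than from scratch: existence of $T^*$ is already inherited from the general Markovian case, so the only new content is that the state-dependent threshold in (\ref{optirulev1}) collapses to a single constant under i.i.d. dynamics. The natural starting point is the dynamic-programming form of the value function, $V_1(\mathbf{F}_t)=\max\{R(\mathbf{F}_t),\,\mathbb{E}[V_1(\mathbf{F}_{t+1})\mid\mathbf{F}_t]\}-\lambda^*$, which follows from (\ref{vtv1}) and the optimality equation. Substituting this into (\ref{optirulev1}) recasts the optimal rule as ``stop at the first $t$ with $R(\mathbf{F}_t)\ge \mathbb{E}[V_1(\mathbf{F}_{t+1})\mid\mathbf{F}_t]$,'' so the entire task reduces to showing that this continuation value is a single constant $\gamma$, independent both of $t$ and of the observed state.

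First I would exploit the memorylessness of the innovations. When $\{H_t\}$ and $\{E_t\}$ are i.i.d., the conditional law of $(\phi_{t+1},E_t,H_{t+1},H^c_{t+1})$ given $\mathbf{F}_t$ coincides with its marginal law, so the predictive coordinate $E_{t-1}$ drops out of the problem entirely and the only channel through which $\mathbf{F}_t$ can enter $\mathbb{E}[V_1(\mathbf{F}_{t+1})\mid\mathbf{F}_t]$ is the battery recursion $B_{t+1}=\min\{B_t+E_t,B_{max}\delta\}$. Since $\{\mathbf{F}_t\}$ is now a time-homogeneous chain, this already delivers a time-invariant stopping set of the form $\{R(\mathbf{F}_t)\ge g(B_t)\}$ with $g(B_t)=\mathbb{E}[V_1(\mathbf{F}_{t+1})\mid B_t]$; the genuinely new step is to prove that $g(\cdot)$ does not vary with $B_t$, which is the whole force of the pure-threshold claim.

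I expect this last step to be the main obstacle, precisely because the battery is \emph{not} memoryless: $B_{t+1}=B_t+E_t$ carries the current battery forward, and since $R(\mathbf{F}_t)$ is strictly increasing in $B_t$, the continuation value $g(B_t)$ is a priori nondecreasing in $B_t$ rather than flat. To close this gap I would invoke the one-stage look-ahead (monotone-case) machinery for i.i.d. observations underlying problem (\ref{newproblem}): I would show that once $R(\mathbf{F}_t)$ exceeds the one-step indifference level the process remains in the stopping region almost surely, so that the look-ahead rule is optimal, and that the indifference level is the single number $\gamma$ rather than a function of $B_t$. The delicate part is controlling the shift of the next offer's distribution caused by $B_{t+1}=B_t+E_t$, and this is where the i.i.d. assumption must be used in an essential way; I would spend most of the effort verifying that the boundary, once expressed in terms of $R$, is insensitive to $B_t$. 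Finally, once $g\equiv\gamma$ is established, (\ref{optistoprule2}) is immediate and $\gamma$ is pinned down by specializing (\ref{optithroux}) to a scalar fixed-point equation, which is exactly the one-dimensional search for $\gamma$ promised in the Introduction.
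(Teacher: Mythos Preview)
Your reduction is correct up to the point where you identify the continuation value $g(B_t)=\mathbb{E}[V_1(\mathbf{F}_{t+1})\mid B_t]$ as the state-dependent threshold; the gap is in the plan for collapsing it to a constant. You note yourself that $g$ is a priori nondecreasing in $B_t$, and in fact it \emph{is} strictly increasing: with more stored energy every future offer is stochastically larger, so the option value of waiting genuinely rises with $B_t$. Hence the target ``$g$ is flat'' is false, and neither the one-stage look-ahead nor the monotone-case machinery will deliver it---the process does not stay in $\{R\ge c\}$ once it enters, because the i.i.d.\ channel can collapse $R(\mathbf{F}_{t+1})$ below any level regardless of $B_{t+1}$. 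So the ``delicate part'' you flag is not merely delicate; as stated it cannot be carried out.

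The paper sidesteps this by never trying to flatten $g$. It introduces the gap function $\Lambda(\mathbf{F}_t)=V_1(\mathbf{F}_t)-R(\mathbf{F}_t)+\lambda^*=\max\{0,\,g(B_t)-R(\mathbf{F}_t)\}$ and shows that $\Lambda(\mathbf{F}_t)\to 0$ as $R(\mathbf{F}_t)\to\infty$, via intermediate estimates that $\mathbb{E}[\Lambda(\mathbf{F}_t)\mid B_t]\to 0$ and $\mathbb{E}[\Lambda(\mathbf{F}_{t+1})\mid \mathbf{F}_t]\to 0$ as $B_t\to\infty$. The point is that although $g(B_t)$ grows with $B_t$, it grows \emph{slower} than $R(\mathbf{F}_t)$ does (the expected one-period gain $\mathbb{E}[R(\mathbf{F}_{t+1})-R(\mathbf{F}_t)\mid B_t]$ shrinks because $\log(1+HB)$ flattens), so the stopping region $\{\Lambda=0\}$ eventually absorbs every state with $R$ above some fixed level $\gamma$. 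This asymptotic comparison, not constancy of $g$, is what yields the pure-threshold form; the argument is patterned on Ferguson--MacQueen \cite{somepro} rather than on the monotone case.
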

\begin{proof}
  Since the optimal stopping rule is given by (\ref{optirulev1}) based on Proposition \ref{stopruleexist}, we could further rearrange the rule as
\begin{align}
     T^*&=\inf\left\{t\geq1:V_1(\mathbf{F}_t)-R(\mathbf{F}_t)+\lambda^*=0\right\}\nonumber\\
     &=\inf\left\{t\geq1:\Lambda(\mathbf{F}_t)=0\right\}.\nonumber
\end{align}
The function $\Lambda(\cdot)$ is defined by $\Lambda(\mathbf{F}_t)=V_1(\mathbf{F}_t)-R(\mathbf{F}_t)+\lambda^*$, where $\mathbf{F}_t=\{\phi_t,B_t,E_{t-1},H_t,H^c_t\}\in\mathcal{F}$. The following properties of $\Lambda(\mathbf{F}_t)$ play a key role in the proof of this proposition:
\begin{enumerate}
  \item $\Lambda(\mathbf{F}_t)\geq0$ for all $\mathbf{F}_t$;
  \item $\mathbb{E}[\Lambda(\mathbf{F}_{t})\mid B_{t}]<+\infty$ for all $B_t\ge 0$. Moreover, $\mathbb{E}[\Lambda(\mathbf{F}_{t})\mid B_{t}]\rightarrow0$ as $B_t\rightarrow\infty$;
  \item $\mathbb{E}[\Lambda(\mathbf{F}_{t+1})\mid\mathbf{F}_{t}]<+\infty$  for all $B_t\ge 0$. Moreover, $\mathbb{E}[\Lambda(\mathbf{F}_{t+1})\mid\mathbf{F}_{t}]\rightarrow0$ as $B_t\rightarrow\infty$;
  \item $\Lambda(\mathbf{F}_t)\rightarrow0$ as $R(\mathbf{F}_t)\rightarrow\infty$.
\end{enumerate}
If all the above properties are true, it follows that $\forall\epsilon>0$, there exists $\gamma\geq0$ such that $\Lambda(\mathbf{F}_t)\leq\epsilon$ whenever $R(\mathbf{F}_t)>\gamma$, which implies that the stopping rule $T^*$ has the form given by (\ref{optistoprule2}) (similar to the technique used in \cite{somepro}). The proof of the four properties is given in Appendix C.
\end{proof}

Moreover, we note that the expected value of the optimal stopping rule $T^*$ indicates the mean saving time. The next proposition shows that for a fixed threshold, the mean saving time is shortened under the proposed opportunistic scheme with multi-channel access.
\begin{Proposition}\label{twithps}
  Given a fixed $\gamma>0$, $\mathbb{E}\left[T^*\right]$ is decreasing polynomially over $p_s$.
\end{Proposition}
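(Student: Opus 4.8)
The plan is to express the mean saving time through its survival function, $\mathbb{E}[T^*]=\sum_{t\geq0}\Pr[T^*>t]$, and to push all the $p_s$-dependence into each survival probability. In the i.i.d. case the channel gains $H_s,H_s^c$ and the securing indicators $\phi_s$ are drawn afresh and independently in every slot and are independent of the energy arrivals $\{E_i\}$, which alone determine the battery path through $B_s=\min\{\sum_{i=1}^{s-1}E_i,B_{max}\delta\}$; moreover the rate $R(\mathbf{F}_s)$ depends on the state only through $(\phi_s,B_s,H_s,H_s^c)$. Consequently, conditioned on the whole energy path (equivalently on $\{B_s\}$), the non-crossing events $\{R(\mathbf{F}_s)<\gamma\}$ are independent across $s$, so that
\[
\Pr[T^*>t]=\mathbb{E}_{\{E_i\}}\!\left[\prod_{s=1}^{t}\bigl(1-g(B_s;p_s)\bigr)\right],
\]
where $g(b;p_s)\triangleq\Pr[R(\mathbf{F}_s)\geq\gamma\mid B_s=b]$ is the per-slot crossing probability at battery level $b$.

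The next step is to make the $p_s$-dependence of $g$ explicit by conditioning on $\phi_s$: $g(b;p_s)=(1-p_s)\,\alpha(b)+p_s\,\beta(b)$, where $\alpha(b)=\Pr[\log(1+H_sb)\geq\gamma]$ is the crossing probability when only the private channel is used (so $P_s=b$) and $\beta(b)$ is the crossing probability when both channels are available and power is split by the water-filling rule of Lemma \ref{powallo}. The crucial inequality is $\beta(b)\geq\alpha(b)$: for every realization of $(H_s,H_s^c)$ the two-channel rate dominates the single-channel rate, since assigning all power to the private channel is feasible and water-filling can only increase the rate, so $\{\log(1+H_sb)\geq\gamma\}\subseteq\{R(\mathbf{F}_s)\geq\gamma\}$ and the inclusion survives taking expectations over $(H_s,H_s^c)$. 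Hence $g(b;p_s)=\alpha(b)+p_s\bigl(\beta(b)-\alpha(b)\bigr)$ is affine and non-decreasing in $p_s$, each factor $1-g(B_s;p_s)\in[0,1]$ is non-increasing in $p_s$, and a product of non-negative non-increasing factors is non-increasing; since expectation and summation preserve this, $\mathbb{E}[T^*]$ is decreasing in $p_s$, strictly so whenever $\beta(b)>\alpha(b)$ on a battery level reached with positive probability.

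To capture the qualifier ``polynomially,'' I would exhibit the explicit functional form. Each factor $1-g(B_s;p_s)$ is affine in $p_s$, so $\prod_{s=1}^{t}(1-g(B_s;p_s))$, and therefore $\Pr[T^*>t]$ after averaging over $\{E_i\}$, is a polynomial in $p_s$ of degree at most $t$. To collapse the infinite sum I would exploit battery saturation: with $B_{max}<+\infty$ the level reaches $B_{max}\delta$ at a finite random time $\tau=\min\{s:B_s=B_{max}\delta\}$, beyond which every factor equals the constant $1-g(B_{max}\delta;p_s)$; splitting the sum at $\tau$ turns the tail into a geometric series with ratio $1-g(B_{max}\delta;p_s)$, giving
\[
\mathbb{E}[T^*]=\mathbb{E}_{\{E_i\}}\!\left[\sum_{t=0}^{\tau-1}\prod_{s=1}^{t}\bigl(1-g(B_s;p_s)\bigr)+\frac{1-g(B_{max}\delta;p_s)}{g(B_{max}\delta;p_s)}\prod_{s=1}^{\tau-1}\bigl(1-g(B_s;p_s)\bigr)\right],
\]
a ratio of polynomials in $p_s$ whose decay is governed by the reciprocal $1/g(B_{max}\delta;p_s)\sim1/p_s$, i.e. a polynomial (as opposed to exponential) rate. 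I expect the main obstacle to be this last step: justifying the conditional-independence factorization cleanly and, above all, controlling the tail in the $B_{max}=+\infty$ regime, where no finite saturation time exists and one must instead show that $1-\alpha(B_s)$ and $1-\beta(B_s)$ vanish fast enough as $B_s\to\infty$ for the series to converge and inherit, term by term, the polynomial-in-$p_s$ structure that underlies the claimed rate of decrease.
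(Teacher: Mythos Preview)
Your argument is correct and rests on the same key observation as the paper---that the per-slot crossing probability is affine and non-decreasing in $p_s$---but you package it differently. The paper writes $\mathbb{E}[T^*]=\sum_t t\,\mathbb{P}(T^*=t)$, truncates the sum at a large $N$, and then runs a backward induction on the nested quantities $U_k=k+(1-q_k(p_s))(U_{k+1}-k)$ to get both monotonicity and the polynomial form of the truncated sum; your route via the survival representation $\mathbb{E}[T^*]=\sum_{t\geq0}\Pr[T^*>t]$ together with ``product of non-increasing factors is non-increasing'' is more direct and avoids the induction entirely. You are also more careful on one point the paper glosses over: the events $\{R(\mathbf{F}_s)<\gamma\}$ are \emph{not} unconditionally independent across $s$ (they share the battery path), and your explicit conditioning on $\{E_i\}$ is exactly what is needed to justify the product formula. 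On the qualifier ``polynomially,'' the paper simply notes that each truncated $U_1$ is a polynomial in $p_s$ and lets the truncation error vanish, whereas you push further to an explicit rational expression when $B_{\max}<\infty$; both treatments are somewhat informal about what ``polynomially decreasing'' means for the full (untruncated) expectation, and your candid remark about the $B_{\max}=\infty$ tail is an honest acknowledgment of a point the paper does not address separately.
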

The proof is given in Appendix D. Following Proposition \ref{stoppingrule2} and Lemma \ref{lemmaequi}, we have
\begin{align}
  0&=\sup_{T\in \mathcal{T}_1}\mathbb{E}[R(\mathbf{F}_{T})-\lambda^*T]\nonumber\\
   &=\mathbb{E}\left[R\left(\mathbf{F}_{T^*}\right)1_{\left\{R(\mathbf{F}_{T^*})\geq\gamma\right\}}\right]-\lambda^*\mathbb{E}[T^*].\nonumber
\end{align}
Then, we obtain
\begin{equation}\label{throughputspe}
  \lambda^*=\max_{\gamma\geq0}\frac{\mathbb{E}\left[R\left(\mathbf{F}_{T^*}\right)1_{\left\{R(\mathbf{F}_{T^*})\geq\gamma\right\}}\right]}{\mathbb{E}\left[T^*\right]}.
\end{equation}

{\bf Conjecture}: $\lambda^*$ is a quasi-concave function over $\gamma$.

Our conjecture will be validated via numerical results in Section \ref{numerical}. Such a conjecture enables us to apply some simple search methods, e.g., bisection search, to find the optimal threshold.

\section{Throughput with Conventional Power Supply}\label{special}
In this section, we investigate the throughput of the MU with a conventional power supply in the discussed multi-channel access system, which will serve as performance benchmarks for our proposed schemes introduced in previous sections. Note that we only need to change the EH constraints into the average power constraints in the setup, and keep the same channel and access models as before.

With a conventional power supply, the instant transmission rate $R_t$ given by (\ref{rate}) still holds. Then, finding the optimal power allocation is equivalent to solving the following optimization problem:
\begin{align}
&\max_{\{P_t,P^c_t\}}~~~\lim_{K\rightarrow\infty}\frac{1}{K}\sum_{t=1}^K\left(\log\left(1+H_tP_t\right)+\phi_t\log\left(1+ H^c_tP^c_t\right)\right)\label{opicapa}\\
&~\hbox{s.t.}~~~~~\lim_{K\rightarrow\infty}\frac{1}{K}\sum_{t=1}^K\left(P_t+\phi_t P^{c}_t\right)\leq \overline{P};\label{averB}\\
&~~~~~~~~~P_t,P^{c}_t\geq0,~\hbox{for all}~t\ge1,\nonumber
\end{align}
where $\overline{P}$ is the maximum average total power. The optimal power allocation is given in the next proposition.
\begin{Proposition}\label{optipowerallo}
The optimal power allocation of problem (\ref{opicapa}) is given as
\begin{equation}
P^*_t=\left(\frac{1}{\xi^*}-\frac{1}{H_t}\right)^+,
P^{c,*}_t=\left\{
           \begin{array}{ll}
             \left(\frac{1}{\xi^*}-\frac{1}{H^c_t}\right)^+, & \hbox{if $\phi_t=1$,} \\
             0, & \hbox{if $\phi_t=0$,}
           \end{array}
         \right.\nonumber
\end{equation}
where $\xi^*$ satisfies the average power constraint (\ref{averB}).
\end{Proposition}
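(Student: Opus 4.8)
The plan is to recognize problem (\ref{opicapa}) as a convex stochastic optimization and to solve it via Lagrangian duality, which produces the water-filling structure directly from the KKT conditions.

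First I would exploit the stationarity and ergodicity of $\{H_t\}$, $\{H_t^c\}$, and $\{\phi_t\}$ to replace the long-term time averages in (\ref{opicapa}) and (\ref{averB}) by expectations over the joint stationary distribution of the state $(H,H^c,\phi)$. This reformulates the objective as $\mathbb{E}[\log(1+HP)+\phi\log(1+H^cP^c)]$ and the constraint as $\mathbb{E}[P+\phi P^c]\le\overline{P}$, where the optimization is over measurable power policies mapping the current state to nonnegative powers $(P,P^c)$. Since $\log(\cdot)$ is concave and the power budget is affine, the problem is convex, and Slater's condition holds (take any strictly positive powers of small magnitude), so strong duality applies.

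Next I would attach a single multiplier $\xi\ge0$ to the average-power constraint and form the Lagrangian $L=\mathbb{E}[\log(1+HP)+\phi\log(1+H^cP^c)-\xi(P+\phi P^c)]+\xi\overline{P}$. Because both the reward and the cost are additively separable across states, the expectation and the inner maximization interchange, so evaluating the dual function reduces to a \emph{pointwise} maximization of $\log(1+HP)+\phi\log(1+H^cP^c)-\xi(P+\phi P^c)$ over $P,P^c\ge0$ for each realization. The first-order conditions $H/(1+HP)=\xi$ and, when $\phi=1$, $H^c/(1+H^cP^c)=\xi$, projected onto the nonnegative orthant, give $P^*=(1/\xi-1/H)^+$ and $P^{c,*}=(1/\xi-1/H^c)^+$. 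When $\phi=0$ the common-channel term is absent from both objective and constraint, so $P^c$ is free and is taken to be $0$. Since the objective is strictly increasing in total power, the budget is active at optimality, which fixes $\xi^*$ by $\mathbb{E}[P^*+\phi P^{c,*}]=\overline{P}$, i.e. $\xi^*$ satisfies (\ref{averB}) with equality.

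The main obstacle is the rigorous justification of the ergodic reformulation together with strong duality: one must argue that the time-average limits exist and coincide with the stated expectations for the class of admissible policies, and that an optimal policy can be taken to be a stationary, memoryless function of the current state, so that the pointwise decoupling of the dual function is legitimate. Given the convexity and the Slater regularity above, this follows from standard convex-duality arguments for ergodic resource-allocation problems; it is the only step that is not purely mechanical, as the remaining KKT computation and the identification of $\xi^*$ are routine.
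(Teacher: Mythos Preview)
Your proposal is correct and follows essentially the same route the paper has in mind: the paper omits the proof entirely and simply remarks that it ``can be proved by applying a similar technique to the proof of optimal adaptation (5) in \cite{Goldsmith},'' i.e., the Goldsmith--Varaiya Lagrangian/water-filling argument, which is precisely your ergodic reformulation plus pointwise KKT derivation. If anything, you supply more detail than the paper does; your only caveat about justifying the ergodic averaging and the sufficiency of stationary memoryless policies is well placed but standard under the paper's stationarity assumptions.
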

Proposition \ref{optipowerallo} can be proved by applying a similar technique to the proof of optimal adaptation (5) in \cite{Goldsmith}, and thus is omitted here.

The optimal power allocation has a ``water-filling'' structure similar to the optimal solution of the single fading channel case under an average power constraint, while the water level is jointly determined by the securing probability $p_s$ and the statistics of both private and common channels.

\section{Numerical Results}\label{numerical}
In this section, we present some numerical results to validate our analysis. Besides the optimal power allocation with a conventional power supply, we also consider the method of best-effort delivery \cite{David} as a comparison benchmark, i.e., the transmitter directly uses up the harvested energy in the previous time slot and does not store energy. In the simulation, the length of each time slot is $1$ ms and the energy step is set to be $\delta=10^{-3}$ J.

\subsection{Markovian Private Channel Gains}
\begin{figure}
  \centering
  \includegraphics[width=3.7in]{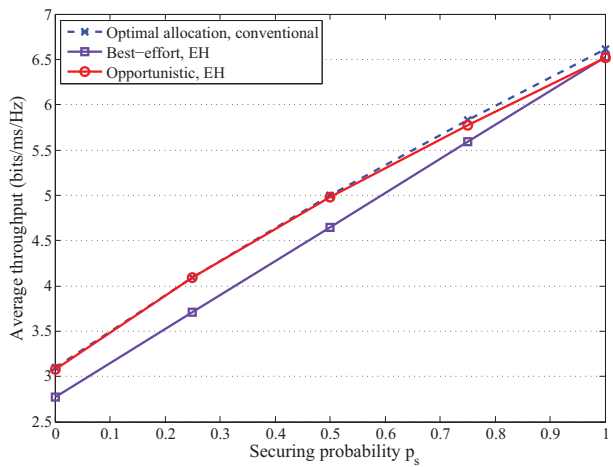}
  \caption{Average throughput vs. securing probability $p_s$ over Markovian channel.}
  \label{finitebatteryps}
\end{figure}

First, we consider a renewable energy supply at the MU with a time-correlated private channel, which corresponds to Section \ref{sollutiongeneral}. Here, we use a simple model to illustrate the throughput performance with different schemes. Let the capacity of the battery $B_{max}=1$, and the EH rate $E_t=\delta$. The common channel is static with a constant power gain $H_t^c=2^5$ for $t\geq1$. The gain of the private channel has two states $\left\{0.1,2^{4}\right\}$ with transition probability 1 from state $0.1$ to $2^{4}$ and probability $0.5$ from state $2^{4}$ to $0.1$.

In Fig.~\ref{finitebatteryps}, we show the average throughput with the opportunistic scheme proposed in Section \ref{sollutiongeneral} against other schemes under the impact of securing probability $p_s$. First, we observe that the optimal allocation with the conventional power supply serves as the performance upper bound. We also observe that the throughput attained by the opportunistic scheme increases as $p_s$ increases, which agrees with Proposition \ref{monoV}. Second, when $0\leq p_s<1$, the opportunistic scheme is better than the best-effort delivery. It agrees with our intuition that when the transmitter experiences a bad channel ($H_t=0.1$), it skips the transmission immediately and waits for a better channel state, which may lead to a higher average throughput. Third, the opportunistic scheme and the best-effort delivery have the same performance at $p_s=1$, since the common channel is good ($H_t^c=2^5$) and always secured by the MU, such that the MU does not need to skip any transmission, which results in the same average throughput for the two schemes. We could also conclude that only when the difference between the good and bad channel conditions is large enough, the opportunistic scheme performs significantly better.

\subsection{i.i.d. Private Channel Gains}
We apply a two-state EH model (similar to that in \cite{NM}), where the EH rate can be either zero (``BAD'') or $4\delta$ (``GOOD'') with probability $0.5$ for each state. The channel gains in either the private or the common channel are i.i.d. following an exponential distribution with unit mean.

\begin{figure}
  \centering
  \includegraphics[width=3.7in]{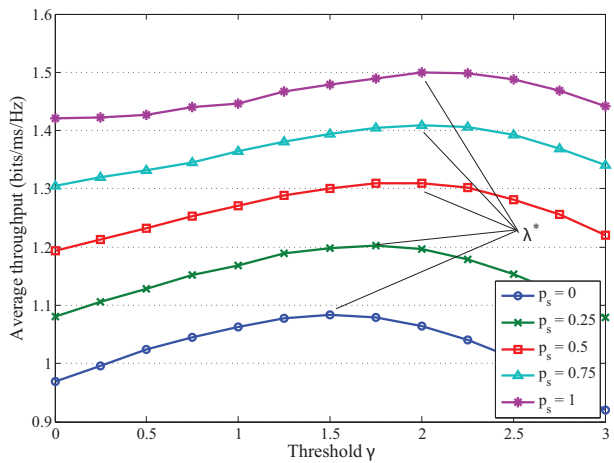}
  \caption{Average throughput vs. threshold $\gamma$.}
  \label{lamgammps}
\end{figure}
In Fig.~\ref{lamgammps}, we show how the threshold $\gamma$ influences the average throughput with different securing probability $p_s$. We observe that the average throughput could be optimized by adjusting the threshold $\gamma$, which validates the results in (\ref{throughputspe}) and our conjecture in Section \ref{infibattery}.

\begin{figure}
  \centering
  \includegraphics[width=3.7in]{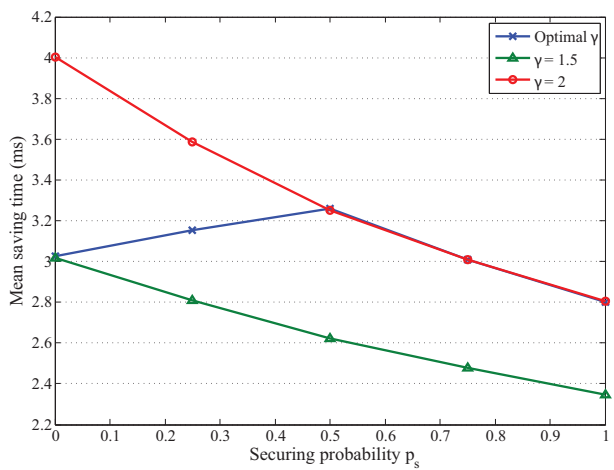}
  \caption{Mean saving time vs. securing probability $p_s$.}
  \label{delayps}
\end{figure}
We also show how the mean saving time varies over the securing probability $p_s$ in Fig. \ref{delayps}. Since the optimal threshold is different when $p_s$ changes, we choose two typical values for comparison: $\gamma=1.5$, which is optimal for $p_s=0$; and $\gamma=2$, which is optimal for $p_s=0.5,~0.75,~1$ based on our results in Fig.~\ref{lamgammps}. For either $\gamma=1.5$ or $\gamma=2$, we observe from Fig. \ref{delayps} that the mean saving time decreases as $p_s$ increases, which agrees with Proposition \ref{twithps}. The mean saving time with optimal $\gamma$ falls in between those with $\gamma=1.5$ and $\gamma=2$, respectively.

\begin{figure}
  \centering
  \includegraphics[width=3in]{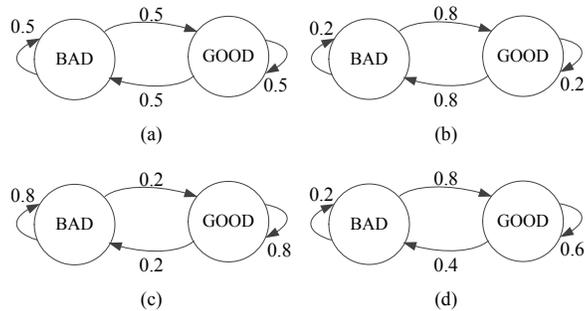}
  \caption{Different EH models.}
  \label{ehmodels}
\end{figure}

\begin{figure}
  \centering
  \includegraphics[width=3.7in]{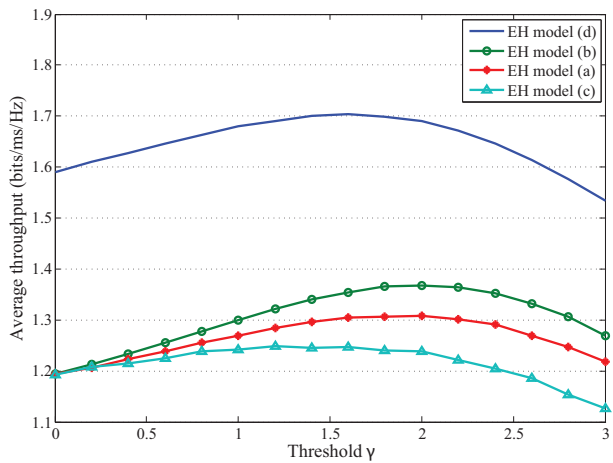}
  \caption{Average throughput vs. threshold $\gamma$ with different EH models.}
  \label{lamgammenergy}
\end{figure}
Next, we want to show the existence of EH diversity and the proposed opportunistic scheme is able to explore this type of diversity. For better illustration, we focus on the pure-threshold policy discussed in Section \ref{infibattery}. The securing probability $p_s$ is set to be 0.5. The Markovian EH model (a) in Fig. \ref{ehmodels} is the benchmark, which is equivalent to an i.i.d. EH model with probability 0.5 to be either ``GOOD'' or ``BAD''. To compare, we choose EH models (b) and (c) as shown in Fig. \ref{ehmodels}, which have the same stationary distribution as that of model (a), while bearing different ``randomness'': Model (b) changes from one state to the other with a higher frequency compared with model (a), and model (c) changes with a lower frequency such that the EH rate is likely to stay in one state and rarely change over time. In addition, we also consider model (d), which represents the case that the EH rate has a higher stationary probability to be ``GOOD''.

In Fig.~\ref{lamgammenergy}, we show the average throughput over different threshold values for the four EH models with the pure-threshold opportunistic scheme depicted in Fig. \ref{ehmodels}. First, we observe that EH model (d) achieves the highest throughput, since model (d) has the largest stationary probability for the EH rate to be in the ``GOOD'' state. Second, we observe the throughput differences across EH models (a), (b) and (c). When $\gamma=0$, these three models lead to the same throughput performance, for $\gamma=0$ implies that the opportunistic transmission scheme is not applied such that the average throughput is mainly determined by the stationary characteristics. When $\gamma$ increases until the optimal value that leads to the maximum average throughput, we observe that the EH model (b) could make the transmitter achieve a slightly higher throughput than model (a). Similarly, model (a) is able to achieve a higher throughput than that of model (c). Note that among models (a), (b) and (c), EH model (b) is more likely to shift from one state to the other, while model (c) is likely to keep staying in either ``BAD'' or ``GOOD'' state. The EH model (a) behaves in between. The observation is that when the EH rate varies in a more dramatic way, it has larger randomness, where we could claim a higher EH diversity. Accordingly, our proposed opportunistic scheme would take advantage of such EH diversity by exploiting the EH variation, where the transmitter could opportunistically wait or start the transmission depending on the energy state.

\begin{figure}
  \centering
  \includegraphics[width=3.7in]{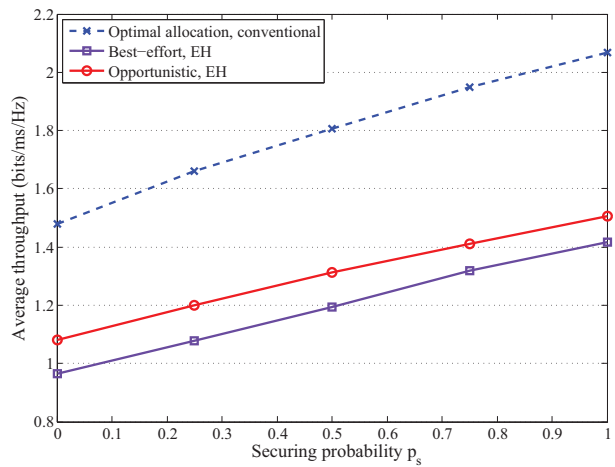}
  \caption{Average throughput vs. securing probability $p_s$ over i.i.d. channel.}
  \label{throughputps}
\end{figure}
Finally, over i.i.d. channel, the throughput performance of the MU with different power supplies and transmission schemes is shown in Fig. \ref{throughputps}. To make them comparable, we let $\overline{P}=2$ W. The EH-based transmitter with the opportunistic transmission scheme could achieve about $70\%$ of the throughput with the optimal power allocation, which is relatively worse than the Markovian case as shown in Fig \ref{finitebatteryps}.

\section{Conclusion}\label{fin}
In this paper, we considered a HetNet uplink with multi-channel access, where each EH-powered MU has deterministic access to a private channel linked to the cellular BS, and random access to a common channel linked to a local AP. As such, the MU could fulfil a transmission via its private channel or via both private and common channels. By jointly taking advantage of channel-energy variation and common channel sharing, we proposed an opportunistic transmission scheme that allows the transmitter to properly probe the channel-energy state, such that the average transmission rate is maximized. In particular, we formulated the average throughput maximization problem as an optimal stopping problem of rate-of-return. By applying the optimal stopping theory, we proved that the optimal stopping rule exists and has a state-dependent and threshold-based structure in general. Moreover, when the private channel gains and EH rates are i.i.d., respectively, the optimal stopping rule turned out to be a simple pure-threshold policy. We also found the optimal power allocation scheme for the transmitter powered by a conventional power supply, to serve as performance benchmarks. Numerical results validated the analysis with both Markovian and i.i.d. statistical models for the private channel gains and EH rates. We showed that under a renewable energy supply, the proposed opportunistic transmission scheme could achieve a higher throughput than the method of best-effort delivery. Also, our simulation results revealed the throughput gap between the cases with  conventional and renewable energy supplies. Furthermore, the phenomenon of EH diversity was briefly discussed, which could be explored by the proposed pure-threshold policy such that the throughput performance could be enhanced.

\section*{Appendices}
\subsection{Proof of Proposition \ref{stopruleexist}}\label{stoprulefinite}
According to the optimal stopping theory \cite{somepro, optstop}, the existence of the optimal stopping rule could be proved by checking the following two conditions: For a given $\lambda>0$,
\begin{description}
  \item[C1:] $\mathbb{E}\left[\sup_{T\geq1}G_T(\lambda)\right]<\infty$;
  \item[C2:] $\limsup_{T\rightarrow\infty}G_{T}(\lambda)\le G_{\infty}=-\infty$ a.s..
\end{description}
We first check C1 and C2 for $B_{max}<+\infty$ and $B_{max}=+\infty$, respectively.
\begin{itemize}
  \item $B_{max}<+\infty$: For C1, we have $\sup_{T\geq1}G_T(\lambda)\leq\sup_{T\geq1}R(\mathbf{F}_T)$. Since the channel gains are finite a.s., and the battery capacity is finite, the expectation of the transmission rate is finite as well, which proves that C1 holds. For C2, we only need to show that for any large negative real number $\nu<0$, there exists $K\geq0$ a.s. such that for all $T\geq K$, $G_T(\lambda)=R(\mathbf{F}_T)-\lambda T<\nu$. In fact, for any $T$, $\mathbb{E}\left[R(\mathbf{F}_T)\right]<\infty$, which implies that $\mathbb{P}\left\{R(\mathbf{F}_T)=\infty\right\}=0$. However, the term $\lambda T$ will increase to infinity as $T\rightarrow\infty$. Thus, when $T\geq K$, $R(\mathbf{F}_T)-\lambda T$ can be as small as we want a.s., i.e., $R(\mathbf{F}_T)-\lambda T<\nu$ a.s., which proves that C2 holds.

  \item $B_{max}=+\infty$: For this case, we check C2 first. Recall the expression of $R(\mathbf{F}_T)$ in (\ref{rate}) and $B_T$ given as $B_{T}=\sum_{i=1}^{T-1}E_i\le E_{max}T$, where $E_{max}$ is the maximum EH rate and is finite. Then, we have
\begin{align}
&R(\mathbf{F}_T)-\lambda T\nonumber\\
\leq &\log\left(\frac{1+H_TB_T}{2^{\lambda T/2}}\right)+\log\left(\frac{1+H_T^{c} B_T}{2^{\lambda T/2}}\right)\nonumber\\
\leq &\log\left(\frac{1+H_TE_{max}T}{2^{\lambda T/2}}\right)+\log\left(\frac{1+H_T^{c} E_{max}T}{2^{\lambda T/2}}\right),\label{lastterm}
\end{align}
a.s.. By using L'H$\hat{o}$pital's rule \cite{Chatterjee2005}, the first term in (\ref{lastterm}) satisfies
\begin{align}
   \lim_{T\rightarrow\infty}\frac{1+H_TE_{max}T}{2^{\lambda T/2}}&\le\lim_{T\rightarrow\infty}\frac{H_TE_{max}}{\frac{\lambda\ln2}{2}2^{\lambda T/2}}=0, ~\hbox{a.s..}\nonumber
\end{align}
We could apply a similar check for the second term of (\ref{lastterm}). Thus, C2 holds. For C1, we could use the above results of C2 and obtain that $\forall\epsilon>0$, there exists an $N>0$ such that $\mathbb{E}\left[\sup_{T\geq1}G_T(\lambda)\right]<\mathbb{E}\left[\sup_{1\leq T\leq N}(R(\mathbf{F}_T)-\lambda T)\right]+\epsilon$. Since the channel gains are finite a.s., and for all $1\leq T\leq N$, $\mathbb{E}[B_T]=\mathbb{E}\left[\sum_{i=1}^{T-1}E_i\right]<\infty$, we obtain $\mathbb{E}\left[\sup_{1\leq T\leq N}(R(\mathbf{F}_T)-\lambda T)\right]<\infty$, which implies that C1 holds.
\end{itemize}
Therefore, both C1 and C2 hold for either $B_{max}<+\infty$ or $B_{max}=+\infty$, which implies that the optimal stopping rule exists.

Next, we derive the optimal stopping rule. According to (\ref{vtv1}), we have $V_t(\mathbf{F}_t)=V_1(\mathbf{F}_t)-\lambda(t-1)$. Meanwhile, $V_t(\mathbf{F}_t)$ satisfies the dynamic programming equation (equation (3) in \cite{Huiwang}):
\begin{align}
  V_t(\mathbf{F}_t)=&\max\left\{R(\mathbf{F}_t)-\lambda t,\mathbb{E}\left[V_{t+1}(\mathbf{F}_{t+1})\mid\mathbf{F}_t\right]\right\}.\label{dyna}
\end{align}
Therefore, the optimal stopping rule has the following form
\begin{align}
  T^*=&\min\left\{t\geq1:R(\mathbf{F}_t)-\lambda t=V_t(\mathbf{F}_t)\right\}\nonumber\\
=&\min\left\{t\geq1:R(\mathbf{F}_t)-\lambda t=V_1(\mathbf{F}_t)-\lambda(t-1)\right\}\nonumber\\
=&\min\left\{t\geq1:R(\mathbf{F}_t)-\lambda=V_1(\mathbf{F}_t)\right\},\nonumber
\end{align}
where the second equation holds due to (\ref{vtv1}). By letting $\lambda=\lambda^*$, we obtain the form of $T^*$ as shown in (\ref{optirulev1}).

Finally, we compute $\lambda^*$. By Lemma \ref{lemmaequi}, $\lambda^*$ makes the following equation hold:
\begin{align}
    0=&\sup_{T\ge1}\mathbb{E}[G_{T}(\lambda)]\nonumber\\
=&\mathbb{E}\left[\max\left\{R(\mathbf{F}_1 )-\lambda^*,\mathbb{E}\left[V_2(\mathbf{F}_2)\mid\mathbf{F}_1 \right]\right\}\right]\nonumber\\
=&\mathbb{E}\left[\max\left\{R(\mathbf{F}_1 )-\lambda^*,-\lambda^*+\mathbb{E}\left[V_1(\mathbf{F}_2)\mid\mathbf{F}_1 \right]\right\}\right].\nonumber
\end{align}
Thus, we could obtain $\lambda^*$ by some simple rearrangements.

\subsection{Proof of Proposition \ref{monoV}}
Recall Proposition \ref{stopruleexist} that the optimal stopping rule $T^*$ exists and it is easy to check that $T^*\in\left\{T\geq 1:\mathbb{E}\left[T\right]<\infty\right\}$. Then, given some $\epsilon>0$, there exists an $M\ge2$ such that for all $t\ge M$, we have $\mathbb{P}(T^*=t)<\epsilon$. Therefore, when we consider the expected value of $V_{1}(\mathbf{F}_{1})$, we can just focus on a finite horizon, i.e., $1\le t\le M$. Then, by the dynamic programming algorithm (e.g., Theorem 2 of Chapter 3 in\cite{optstop}, or equation (3) in \cite{Huiwang}), we have
\begin{align}
  &V_{1}(\mathbf{F}_{t})=\max\left\{R(\mathbf{F}_t),\mathbb{E}\left[V_{1}(\mathbf{F}_{t+1})\mid\mathbf{F}_t\right]\right\}-\lambda^*,\nonumber\\
  &~~~~~~~~~~~\hbox{for $t=1,2,\ldots,M-1$}\nonumber\\
  &V_{1}(\mathbf{F}_{M})=R(\mathbf{F}_M)-\lambda^*.\nonumber
\end{align}

Now, we show that $\lambda^*$ is strictly increasing over $p_s$ by contradiction. First, we fix $\lambda^*$, and let $p_s$ increase to $p_s+\Delta$, where $\Delta$ is a small positive real number. Then, we move backward. Note that at step $t=M$, $V_{1}(\mathbf{F}_{M})$ only depends on $\mathbf{F}_{M}$ and does not change with $p_s$. At $t=M-1$, we observe that
\begin{align}
  &\mathbb{E}\left[V_{1}(\mathbf{F}_{M})\mid\mathbf{F}_{M-1}\right]\nonumber\\
   =&(p_s+\Delta)\mathbb{E}\left[R(H_M,H_M^c)-R(H_M,0)\mid\mathbf{F}_{M-1}\right]\nonumber\\
   &+\mathbb{E}\left[R(H_M,0)\mid\mathbf{F}_{M-1}\right]-\lambda^*.\nonumber
\end{align}
Note that the private channel could not be strictly better than the common channel \cite{JGAndrews2012}, i.e., it is unrealistic that $\min_{H_M\in\mathcal{H}}H_M>\max_{H_M^c\in\mathcal{H}_c}H_M^c$. It follows that $\mathbb{E}\left[R(H_M,H_M^c)-R(H_M,0)\mid\mathbf{F}_{M-1}\right]>0$. Thus, we have that $\mathbb{E}\left[R(\mathbf{F}_M)-\lambda^*\mid\mathbf{F}_{M-1}\right]$ strictly increases as $p_s$ increases to $p_s+\Delta$.

Suppose that at $t=k$ for $2\le k\le M-1$, $\mathbb{E}\left[V_{1}(\mathbf{F}_{k+1})\mid\mathbf{F}_{k}\right]$ strictly increases as $p_s$ increases to $p_s+\Delta$. Since the expected value of $R(\mathbf{F}_{k})$ also strictly increases following a similar argument as we discussed at step $t=M-1$, we have that the expected value of $\max\left\{R(\mathbf{F}_{k}),\mathbb{E}\left[V_{1}(\mathbf{F}_{k+1})\mid\mathbf{F}_{k}\right]\right\}$ strictly increases. Then, at $t=k-1$, we have
\begin{align}
  &\mathbb{E}\left[V_{1}(\mathbf{F}_{k})\mid\mathbf{F}_{k-1}\right]\nonumber\\
  =&\mathbb{E}\left[\max\left\{R(\mathbf{F}_{k}),\mathbb{E}\left[V_{1}(\mathbf{F}_{k+1})\mid\mathbf{F}_{k}\right]\right\}\mid\mathbf{F}_{k-1}\right]-\lambda^*,
\end{align}
which strictly increases and thus implies that such an increment holds for all $t=1,2,\ldots,M-1$.

At the step $t=1$, we have
\begin{align}
  \mathbb{E}[V_{1}(\mathbf{F}_{1})]= \mathbb{E}\left[\max\left\{R(\mathbf{F}_{1}),\mathbb{E}\left[V_{1}(\mathbf{F}_{2})\mid\mathbf{F}_{1}\right]\right\}\right]-\lambda^*,
\end{align}
where $\mathbb{E}\left[\max\left\{R(\mathbf{F}_{1}),\mathbb{E}\left[V_{1}(\mathbf{F}_{2})\mid\mathbf{F}_{1}\right]\right\}\right]$ should also strictly increase as $p_s$ increases to $p_s+\Delta$. However, we recall from Proposition \ref{stopruleexist} that $\mathbb{E}[V_{1}(\mathbf{F}_{1})]=0$, which is attained by $T^*$ and $\lambda^*$. It implies that in order to make $\mathbb{E}[V_{1}(\mathbf{F}_{1})]=0$, the value $\lambda^*$ should not be fixed and must strictly increase accordingly, which contradicts the assumption in the first step that $\lambda^*$ is fixed. Thus, $\lambda^*$ strictly increases as $p_s$ increases. Finally, this proposition is proved by letting $\epsilon\rightarrow0$ (i.e., $M$ is large enough).

\subsection{Proof of Proposition \ref{stoppingrule2}}

For Property 1), it is straightforward to see that
\begin{align}
     \Lambda(\mathbf{F}_t)=&V_1(\mathbf{F}_t)-R(\mathbf{F}_t)+\lambda^*\nonumber\\
=&\max\left\{R(\mathbf{F}_t)-\lambda^*,-\lambda^*+\mathbb{E}\left[V_1(\mathbf{F}_{t+1})\mid \mathbf{F}_t\right]\right\}\nonumber\\
     &-R(\mathbf{F}_t)+\lambda^*\nonumber\\
=&\max\left\{0,\mathbb{E}\left[V_1(\mathbf{F}_{t+1})\mid\mathbf{F}_t\right]-R(\mathbf{F}_t)\right\}\geq0.\label{phiv2}
\end{align}

For Property 2), suppose that the transmitter does not stop channel-energy probing until time $t$; then starting at $t$, we should have $T\in\left\{T\geq t:\mathbb{E}\left[T\right]<\infty\right\}$. Thus, $\mathbb{E}[\Lambda(\mathbf{F}_t)\mid B_t]$ could be written as
\begin{align}
\mathbb{E}[\Lambda(\mathbf{F}_t)\mid B_t]=\sum_{n\ge t}\mathbb{P}(T=n)\mathbb{E}[\Lambda(\mathbf{F}_t)\mid B_t,T=n]<\infty,\nonumber
\end{align}
due to $\mathbb{P}(T=+\infty)=0$. Then, with a fixed $T=n$ such that $t\le n<\infty$, along with Property 1), $\mathbb{E}[\Lambda(\mathbf{F}_t)\mid B_t,n]$ is expanded as
\begin{align}
0\leq &\mathbb{E}[\Lambda(\mathbf{F}_t)\mid B_t,n]\nonumber\\
=&\mathbb{E}\left[R(\mathbf{F}_n)-R(\mathbf{F}_t)-\lambda^*n\mid B_t\right]+\lambda^*\nonumber\\
\leq&(1-p_s)\mathbb{E}\left[\log\left(\frac{1+HB_n}{1+HB_t}\right)\right]\label{term1}\\
+&p_s\left(\mathbb{E}\left[\log\left(\frac{1+HP_n}{1+HP_t}\right)+\log\left(\frac{1+H^{c}P_n^{c}}{1+H^{c}P_t^{c}}\right)\right]\right),\label{term2}
\end{align}
where the second inequality holds due to $-\lambda^*n+\lambda^*\leq0$ for $n\geq t$. Note that we do not put the time index $n$ on $H$ and $H^c$ since $\{H_t\}_{t\geq1}$ and $\{H_t^c\}_{t\geq1}$ are i.i.d., respectively. Next, we want to show that both (\ref{term1}) and (\ref{term2}) are finite and could be as small as we want with a large $B_t$, which would complete the proof for 2).
\begin{itemize}
  \item For (\ref{term1}): by plugging $B_n=B_t+\sum_{i=t}^{n-1}E_i$, we obtain
\begin{align}
(\ref{term1})=(1-p_s)\mathbb{E}\left[\log\left(1+\frac{H\sum_{i=t}^{n-1}E_i}{1+HB_t}\right)\right]<+\infty\nonumber
\end{align}
since $H$ has finite mean and $\{E_j\}_{t\leq j\leq n-1}$ are i.i.d. with finite mean as well. Moreover, if $B_t\rightarrow\infty$, $(\ref{term1})\rightarrow0$.
  \item For (\ref{term2}): Since $P_n+P_n^c=B_n$, and both $H$ and $H^c$ have finite means, respectively, it follows that (\ref{term2}) is finite. When the transmitter occupies the common channel at time $T\geq t$, there are three possible events by Lemma \ref{powallo}: If $\left|\frac{1}{H^{c}}-\frac{1}{H}\right|\geq B_n$, allocating all power to one of the two channels; otherwise, allocating the power to both channels at a certain ratio. Note that the probability of any above events happening does not depend on $n$ if $B_t$ is large enough. To see this point, we let
      \begin{align}
        &Q=\mathbb{P}\left(\left|\frac{1}{H^{c}}-\frac{1}{H}\right|<B_t\right),\nonumber\\
        &q_1=\mathbb{P}\left(\left|\frac{1}{H^{c}}-\frac{1}{H}\right|\geq B_t,H>H^c\right),\nonumber\\
        &q_2=\mathbb{P}\left(\left|\frac{1}{H^{c}}-\frac{1}{H}\right|\geq B_t,H<H^c\right).\nonumber
      \end{align}
      When $B_t$ is large, there is
\begin{align}
\mathbb{P}\left(\left|\frac{1}{H^{c}}-\frac{1}{H}\right|< B_t+\sum_{i=t}^{n-1}E_i\right)\approx Q,\nonumber
\end{align}
and similarly, we have
\begin{align}
  &\mathbb{P}\left(\left|\frac{1}{H^{c}}-\frac{1}{H}\right|\geq B_n,H>H^c\right)\approx q_1,\nonumber\\
  &\mathbb{P}\left(\left|\frac{1}{H^{c}}-\frac{1}{H}\right|\geq B_n,H<H^c\right)\approx q_2.\nonumber
\end{align}
Then, by applying $Q$, $q_1$ and $q_2$, we can expand (\ref{term2}) as
\begin{align}
  &(\ref{term2})\approx\nonumber\\
&p_s\left(q_1\mathbb{E}\left[\log\left(\frac{1+HB_n}{1+hB_t}\right)\right]+q_2\mathbb{E}\left[\log\left(\frac{1+H^cB_n}{1+h^cB_t}\right)\right]\right)\nonumber\\
  &~~+p_sQ\mathbb{E}\left[\log\frac{\left(1+HB_n+\frac{H}{H^{c}}\right)\left(1+H^{c}B_n+\frac{H^{c}}{H}\right)}{\left(1+HB_t+\frac{H}{H^{c}}\right)\left(1+H^{c}B_t+\frac{H^{c}}{H}\right)}\right].\nonumber
\end{align}
Similarly as the reasoning in (\ref{term1}), we obtain that $(\ref{term2})\rightarrow0$ as $B_t\rightarrow\infty$.
\end{itemize}
Therefore, we conclude that $\mathbb{E}[\Lambda(\mathbf{F}_t)\mid B_t]$ is finite and could be arbitrarily small when $B_t$ is large enough.

For 3), we expend $\mathbb{E}[\Lambda(\mathbf{F}_{t+1})\mid\mathbf{F}_{t}]$ as
\begin{align}
&\mathbb{E}[\Lambda(\mathbf{F}_{t+1})\mid\mathbf{F}_{t}]=\mathbb{E}[\Lambda(\mathbf{F}_{t+1})\mid B_{t}]\nonumber\\
=&\sum_{e\in\mathcal{E}}\mathbb{P}(E_{t}=e)\mathbb{E}[\Lambda(\mathbf{F}_{t+1})\mid B_{t}],\nonumber
\end{align}
since only $\{B_{t}\}$ are correlated over time. By Property 2), we know $\mathbb{E}[\Lambda(\mathbf{F}_{t+1})\mid B_{t}]$ is finite and thus $\mathbb{E}[\Lambda(\mathbf{F}_{t+1})\mid\mathbf{F}_{t}]$ is finite since $\mathcal{E}$ is a finite space. Moreover, by Property 2), we have $\mathbb{E}[\Lambda(\mathbf{F}_{t+1})\mid B_t]\rightarrow0$ as $B_t\rightarrow\infty$. Therefore, it follows that $\mathbb{E}[\Lambda(\mathbf{F}_{t+1})\mid\mathbf{F}_{t}]$ could be as small as we want when $B_t$ is large enough.

By now, we are ready to show Property 4). We could rewrite (\ref{phiv2}) as
\begin{align}
     \Lambda(\mathbf{F}_t)&=\max\left\{0,\mathbb{E}\left[V_1(\mathbf{F}_{t+1})\mid\mathbf{F}_t\right]-R(\mathbf{F}_t)\right\}\nonumber\\
&=\max\left\{0,\mathbb{E}\left[\Lambda(\mathbf{F}_{t+1})+R(\mathbf{F}_{t+1})-\lambda^*\mid\mathbf{F}_t\right]-R(\mathbf{F}_t)\right\}.\nonumber
\end{align}
Next, we show Property 4) by contradiction. Suppose that $\Lambda(\mathbf{F}_t)>0$ for all $R(\mathbf{F}_t)\geq0$, we have
\begin{align}
     \mathbb{E}\left[\Lambda(\mathbf{F}_{t+1})+R(\mathbf{F}_{t+1})\mid\mathbf{F}_t\right]>R(\mathbf{F}_t)+\lambda^*.\label{compare}
\end{align}
For the left-hand side of (\ref{compare}), $\mathbb{E}\left[R(\mathbf{F}_{t+1})\mid\mathbf{F}_t\right]$ is finite for any fixed $B_t$, and $\mathbb{E}\left[\Lambda(\mathbf{F}_{t+1})\mid\mathbf{F}_t\right]$ is either a finite number or a arbitrarily small positive number if $B_t$ is large enough. Then, we choose $K<+\infty$ and $B_t=B_{max}$ such that the left-hand side of  (\ref{compare}) is upper-bounded by $K$. With such $K$ and $B_t$, we have
\begin{align}
     K&>\mathbb{E}\left[\Lambda(\mathbf{F}_{t+1})+R(\mathbf{F}_{t+1})\mid\mathbf{F}_t\right]>R(\mathbf{F}_t)+\lambda^*.\label{comparerate}
\end{align}
However, for the right-hand side of (\ref{compare}) with the same $B_t$, $R(\mathbf{F}_t)$ could be arbitrarily large if $H_t$ and $H_t^c$ are large enough. Then, there always exists an $M>0$ such that when $H_t,H_t^c>M$, $R(\mathbf{F}_t)>K$, which leads to the contradiction with the inequality (\ref{comparerate}). Therefore, we obtain that $\Lambda(\mathbf{F}_t)=0$ when $R(\mathbf{F}_t)$ is large enough.

Overall, we have shown that all four properties hold, and we conclude that the optimal stopping rule has a pure-threshold structure given by (\ref{optistoprule2}).

\subsection{Proof of Proposition \ref{twithps}}
Given some $\gamma>0$, we let $q_t(p_s)=\mathbb{P}\left(R(H_t,H_t^c)\geq\gamma\right)$. Based on the form of the stopping rule $T^*$ given by (\ref{optistoprule2}), we obtain
\begin{align}
\mathbb{E}\left[T^*\right]=q_1(p_s)+\sum_{t=2}^{\infty}tq_t(p_s)\prod_{n=1}^{t-1}(1-q_n(p_s)).\nonumber
\end{align}
Since $\mathbb{E}\left[T^*\right]<\infty$, it follows that $\forall\epsilon,\epsilon_0>0$, there exists $N>0$ such that $\mathbb{P}\left(T^*=t\right)=q_t(p_s)\prod_{n=1}^{t-1}(1-q_n(p_s))<\epsilon$ for all $t\ge N$, and $\sum_{t=N}^{\infty}tq_t(p_s)\prod_{n=1}^{t-1}(1-q_n(p_s))<\epsilon_0$. Note that the generality still holds by letting $q_N(p_s)=\epsilon$ since $\mathbb{P}(T^*=N)=\epsilon\prod_{n=1}^{N-1}(1-q_n(p_s))<\epsilon$. Then, we have
\begin{align}
&\mathbb{E}\left[T^*\right]=q_1(p_s)+\sum_{t=2}^{N}tq_t(p_s)\prod_{n=1}^{t-1}(1-q_n(p_s))+\epsilon_0\nonumber\\
=&\epsilon_0+q_1(p_s)+(1-q_1(p_s))\cdot\left(~2q_2(p_s)+(1-q_2(p_s))\cdot\right.\nonumber\\
&~~\left.\cdots\left(~(N-1)q_{N-1}(p_s)+(1-q_{N-1}(p_s))N\epsilon~\right)\cdots\right).\nonumber
\end{align}
We introduce $U_t=tq_t(p_s)+(1-q_t(p_s))U_{t+1}=t+(1-q_t(p_s))\left(U_{t+1}-t\right)$, where we notice $U_{t+1}-t>0$. With this notation, we have $\mathbb{E}\left[T^*\right]=\epsilon_0+U_{1}$.

Next, we show the monotonicity of $\mathbb{E}\left[T^*\right]$ by using the mathematical induction in a ``backward'' fashion: From a very large number $N$ back to $t=1$. First, we check $U_N$. It is true since $U_{N}=N\epsilon$, which is independent of $p_s$. Then, suppose that $U_{k+1}$ is decreasing over $p_s$ for $k=2,\ldots,N-1$; we check $U_k=k+(1-q_k(p_s))(U_{k+1}-k)$. For $q_k(p_s)$, we have
\begin{align}
q_k(p_s)=&\mathbb{P}\left(R(H_k,0)\geq\gamma\right)+\nonumber\\
&p_s\left(\mathbb{P}\left(R(H_k, H_k^c)\geq\gamma\right)-\mathbb{P}\left(R(H_k,0)\geq\gamma\right)\right),\nonumber
\end{align}
where $\mathbb{P}\left(R(H_k, H_k^c)\geq\gamma\right)\geq\mathbb{P}\left(R(H_k,0)\geq\gamma\right)$ due to $R(H_k, H_k^c)\geq R(H_k,0)$. It follows that $q_k(p_s)$ is an increasing linear function of $p_s$, and then $1-q_k(p_s)$ is deceasing. Since both $(1-q_k(p_s))$ and $(U_{k+1}-k)$ are nonnegative and decreasing, $U_k$ is decreasing as well. Moreover, $U_k$ is a polynomial function of $p_s$ due to the linearity of $q_k(p_s)$ and the iteration function, i.e., $U_k=k+(1-q_k(p_s))(U_{k+1}-k)$. Thus, we obtain that $\mathbb{E}[T^*]=U_{1}+\epsilon_0$ is a polynomial function and decreasing over $p_s$. By letting $\epsilon_0\rightarrow0$, we are done with the proof for this proposition.

\end{spacing}
\end{document}